\DeclareMathOperator*{\argmin}{arg\,min}
\DeclareMathOperator{\Tr}{Tr}
\DeclareMathOperator{\Var}{Var}
\DeclareMathOperator{\Law}{Law}
\DeclareMathOperator{\supp}{supp}
\DeclareMathOperator{\ide}{I}
\newtheorem{theorem}{Theorem}
\newtheorem{lemma}{Lemma}
\title{Diffusion-Based Voice Conversion with Fast Maximum Likelihood Sampling Scheme}
\author{Vadim Popov, Ivan Vovk, Vladimir Gogoryan, \\
Huawei Noah's Ark Lab, Moscow, Russia\\
\texttt{\{vadim.popov,vovk.ivan,gogoryan.vladimir\}@huawei.com}
\AND Tasnima Sadekova, Mikhail Kudinov \& Jiansheng Wei \\
Huawei Noah's Ark Lab, Moscow, Russia\\
\texttt{\{sadekova.tasnima,kudinov.mikhail,weijiansheng\}@huawei.com}
}
\begin{document}

\maketitle

\begin{abstract}
Voice conversion is a common speech synthesis task which can be solved in different ways depending on a particular real-world scenario. The most challenging one often referred to as one-shot many-to-many voice conversion consists in copying target voice from only one reference utterance in the most general case when both source and target speakers do not belong to the training dataset. We present a scalable high-quality solution based on diffusion probabilistic modeling and demonstrate its superior quality compared to state-of-the-art one-shot voice conversion approaches. Moreover, focusing on real-time applications, we investigate general principles which can make diffusion models faster while keeping synthesis quality at a high level. As a result, we develop a novel Stochastic Differential Equations solver suitable for various diffusion model types and generative tasks as shown through empirical studies and justify it by theoretical analysis. The code is publicly available at \url{https://github.com/huawei-noah/Speech-Backbones/tree/main/DiffVC}.
\end{abstract}

\section{Introduction}
\label{sec:intro}
Voice conversion (VC) is the task of copying the target speaker's voice while preserving the linguistic content of the utterance pronounced by the source speaker. Practical VC applications often require a model which is able to operate in one-shot mode (i.e. when only one reference utterance is provided to copy the target speaker's voice) for any source and target speakers. Such models are usually referred to as one-shot many-to-many models (or sometimes zero-shot many-to-many models, or just any-to-any VC models). It is challenging to build such a model since it should be able to adapt to a new unseen voice having only one spoken utterance pronounced with it, so it was not until recently that successful one-shot VC solutions started to appear.

Conventional one-shot VC models are designed as autoencoders whose latent space ideally contains only the linguistic content of the encoded utterance while target voice identity information (usually taking shape of speaker embedding) is fed to the decoder as conditioning. Whereas in the pioneering AutoVC model \citep{AutoVC} only speaker embedding from the pre-trained speaker verification network was used as conditioning, several other models improved on AutoVC enriching conditioning with phonetic features such as pitch and loudness \citep{AutoVC-F0, AutoVC-PPG}, or training voice conversion and speaker embedding networks jointly \citep{AdaIN}. Also, several papers \citep{FragmentVC, AttentionEmbeds, PPG-VC} made use of attention mechanism to better fuse specific features of the reference utterance into the source utterance thus improving the decoder performance. Apart from providing the decoder with sufficiently rich information, one of the main problems autoencoder VC models face is to disentangle source speaker identity from speech content in the encoder. Some models \citep{AutoVC, AutoVC-F0, AutoVC-PPG} solve this problem by introducing an information bottleneck. Among other popular solutions of the disentanglement problem one can mention applying vector quantization technique to the content information \citep{VQVCp, VQMIVC}, utilizing features of Variational AutoEncoders \citep{DisentVAE, VAE-PPG, AdaIN}, introducing instance normalization layers \citep{AdaIN, AGAIN}, and using Phonetic Posteriorgrams (PPGs) \citep{AutoVC-PPG, PPG-VC}.

The model we propose in this paper solves the disentanglement problem by employing the encoder predicting ``average voice'': it is trained to transform mel features corresponding to each phoneme into mel features corresponding to this phoneme averaged across a large multi-speaker dataset. As for decoder, in our VC model, it is designed as a part of a Diffusion Probabilistic Model (DPM) since this class of generative models has shown very good results in speech-related tasks like raw waveform generation \citep{WaveGrad, DiffWave} and mel feature generation \citep{Grad-TTS, DiffTTS}. However, this decoder choice poses a problem of slow inference because DPM forward pass scheme is iterative and to obtain high-quality results it is typically necessary to run it for hundreds of iterations \citep{DDPM, IDDPM}. Addressing this issue, we develop a novel inference scheme that significantly reduces the number of iterations sufficient to produce samples of decent quality and does not require model re-training. Although several attempts have been recently made to reduce the number of DPM inference steps \citep{DDIM, NoiseEstimation, TrainableSampling, FastSampling, WaveGrad}, most of them apply to some particular types of DPMs. In contrast, our approach generalizes to all popular kinds of DPMs and has a strong connection with likelihood maximization.

This paper has the following structure: in Section \ref{sec:vc_model} we present a one-shot many-to-many VC model and describe DPM it relies on; Section \ref{sec:ml_inference} introduces a novel DPM sampling scheme and establishes its connection with likelihood maximization; the experiments regarding voice conversion task as well as those demonstrating the benefits of the proposed sampling scheme are described in Section \ref{sec:exp}; we conclude in Section \ref{sec:outro}.

\section{Voice conversion diffusion model}
\label{sec:vc_model}
As with many other VC models, the one we propose belongs to the family of autoencoders. In fact, any conditional DPM with data-dependent prior (i.e. terminal distribution of forward diffusion) can be seen as such: forward diffusion gradually adding Gaussian noise to data can be regarded as encoder while reverse diffusion trying to remove this noise acts as a decoder. DPMs are trained to minimize the distance (expressed in different terms for different model types) between the trajectories of forward and reverse diffusion processes thus, speaking from the perspective of autoencoders, minimizing reconstruction error. Data-dependent priors have been proposed by \citet{Grad-TTS} and \citet{PriorGrad}, and we follow the former paper due to the flexibility of the continuous DPM framework used there. Our approach is summarized in Figure~\ref{pic:main}.

\begin{figure*}[ht]
\begin{center}
\center{\includegraphics[width=1.0\linewidth]{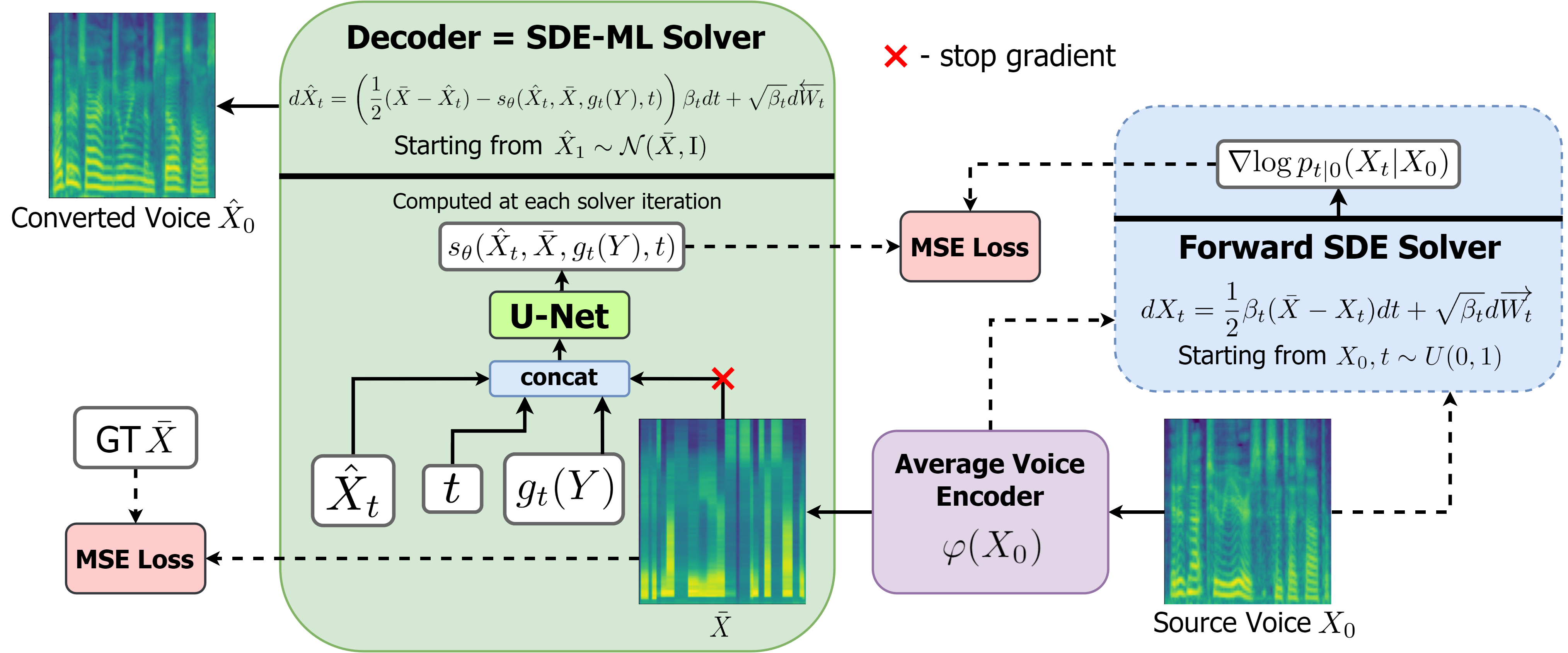}}
\end{center}
\caption{VC model training and inference. $Y$ stands for the training mel-spectrogram at training and the target mel-spectrogram at inference. Speaker conditioning in the decoder is enabled by the speaker conditioning network $g_{t}(Y)$ where $Y=\{Y_{t}\}_{t\in[0, 1]}$ is the whole forward diffusion trajectory starting at $Y_{0}$. Dotted arrows denote operations performed only at training.}
\label{pic:main}
\end{figure*}

\subsection{Encoder}
\label{subsec:encoder}

We choose average phoneme-level mel features as speaker-independent speech representation. To train the encoder to convert input mel-spectrograms into those of ``average voice'', we take three steps: (i) first, we apply Montreal Forced Aligner \citep{MFA} to a large-scale multi-speaker LibriTTS dataset \citep{LibriTTS} to align speech frames with phonemes;
(ii) next, we obtain average mel features for each particular phoneme by aggregating its mel features across the whole LibriTTS dataset;
(iii) the encoder is then trained to minimize mean square error between output mel-spectrograms and ground truth ``average voice'' mel-spectrograms (i.e. input mel-spectrograms where each phoneme mel feature is replaced with the average one calculated on the previous step).

The encoder has exactly the same Transformer-based architecture used in Grad-TTS \citep{Grad-TTS} except that its inputs are mel features rather than character or phoneme embeddings. Note that unlike Grad-TTS the encoder is trained separately from the decoder described in the next section.

\subsection{Decoder}
\label{subsec:decoder}

Whereas the encoder parameterizes the terminal distribution of the forward diffusion (i.e. the prior), the reverse diffusion is parameterized with the decoder. Following \citet{Song-main} we use It\^o calculus and define diffusions in terms of stochastic processes rather than discrete-time Markov chains.

The general DPM framework we utilize consists of forward and reverse diffusions given by the following Stochastic Differential Equations (SDEs):

\begin{equation}
\label{eq:vc_fwd_sde}
    dX_{t}=\frac{1}{2}\beta_{t}(\bar{X} - X_{t})dt + \sqrt{\beta_{t}}d\overrightarrow{W_{t}}\ ,
\end{equation}

\begin{equation}
\label{eq:vc_rev_sde}
    d\hat{X}_{t}=\left(\frac{1}{2}(\bar{X} - \hat{X}_{t}) - s_{\theta}(\hat{X}_{t}, \bar{X}, t)\right)\beta_{t}dt + \sqrt{\beta_{t}}d\overleftarrow{W_{t}}\ ,
\end{equation}

where $t\in [0,1]$, $\overrightarrow{W}$ and $\overleftarrow{W}$ are two independent Wiener processes in $\mathbb{R}^{n}$, $\beta_{t}$ is non-negative function referred to as \textit{noise schedule}, $s_{\theta}$ is the score function with parameters $\theta$ and $\bar{X}$ is $n$-dimensional vector. It can be shown \citep{Grad-TTS} that the forward SDE (\ref{eq:vc_fwd_sde}) allows for explicit solution:

\begin{equation}
\label{eq:vc_solution}
    \Law(X_{t}|X_{0}) = \mathcal{N}\left(e^{-\frac{1}{2}\int_{0}^{t}{\beta_{s}ds}}X_{0} + \left(1-e^{-\frac{1}{2}\int_{0}^{t}{\beta_{s}ds}}\right)\bar{X}, \left(1 - e^{-\int_{0}^{t}{\beta_{s}ds}}\right)\ide\right),
\end{equation}

where $\ide$ is $n\times n$ identity matrix. Thus, if noise follows linear schedule $\beta_{t}=\beta_{0} + t(\beta_{1}-\beta_{0})$ for $\beta_{0}$ and $\beta_{1}$ such that $e^{-\int_{0}^{1}{\beta_{s}ds}}$ is close to zero, then $\Law{(X_{1})}$ is close to $\mathcal{N}(\bar{X},\ide)$ which is the prior in this DPM. The reverse diffusion (\ref{eq:vc_rev_sde}) is trained by minimizing weighted $L_{2}$ loss:

\begin{equation}
\label{eq:vc_training}
    \theta^{*}=\argmin_{\theta}{\mathcal{L}(\theta)}=\argmin_{\theta}{\int_{0}^{1}{\lambda_{t}\mathbb{E}_{X_{0}, X_{t}}\Vert s_{\theta}(X_{t}, \bar{X}, t) - \nabla{\log{p_{t|0}(X_{t}|X_{0})}}\Vert_{2}^{2}}}dt,
\end{equation}

where $p_{t|0}(X_{t}|X_{0})$ is the probability density function (pdf) of the conditional distribution (\ref{eq:vc_solution}) and $\lambda_{t}=1 - e^{-\int_{0}^{t}{\beta_{s}ds}}$. The distribution (\ref{eq:vc_solution}) is Gaussian, so we have

\begin{equation}
\label{eq:vc_logp}
    \nabla\log{p_{t|0}(X_{t}|X_{0})} = -\frac{X_{t}-X_{0}e^{-\frac{1}{2}\int_{0}^{t}{\beta_{s}ds}}-\bar{X}(1-e^{-\frac{1}{2}\int_{0}^{t}{\beta_{s}ds}})}{1 - e^{-\int_{0}^{t}{\beta_{s}ds}}}.
\end{equation}

At training, time variable $t$ is sampled uniformly from $[0,1]$, noisy samples $X_{t}$ are generated according to the formula (\ref{eq:vc_solution}) and the formula (\ref{eq:vc_logp}) is used to calculate loss function $\mathcal{L}$ on these samples. Note that $X_{t}$ can be sampled without the necessity to calculate intermediate values $\{X_{s}\}_{0<s<t}$ which makes optimization task (\ref{eq:vc_training}) time and memory efficient. A well-trained reverse diffusion (\ref{eq:vc_rev_sde}) has trajectories that are close to those of the forward diffusion (\ref{eq:vc_fwd_sde}), so generating data with this DPM can be performed by sampling $\hat{X}_{1}$ from the prior $\mathcal{N}(\bar{X}, \ide)$ and solving SDE (\ref{eq:vc_rev_sde}) backwards in time.

The described above DPM was introduced by \citet{Grad-TTS} for text-to-speech task and we adapt it for our purposes. We put $\bar{X}=\varphi(X_{0})$ where $\varphi$ is the encoder, i.e. $\bar{X}$ is the ``average voice'' mel-spectrogram which we want to transform into that of the target voice. We condition the decoder $s_{\theta}=s_{\theta}(\hat{X}_{t},\bar{X},g_{t}(Y),t)$ on some trainable function $g_{t}(Y)$ to provide it with information about the target speaker ($Y$ stands for forward trajectories of the target mel-spectrogram at inference and the ones of the training mel-spectrogram at training). This function is a neural network trained jointly with the decoder. We experimented with three input types for this network:

\begin{itemize}
    \item \textit{d-only} -- the input is the speaker embedding extracted from the target mel-spectrogram $Y_{0}$ with the pre-trained speaker verification network employed in \citep{dvector};
    \item \textit{wodyn} -- in addition, the noisy target mel-spectrogram $Y_{t}$ is used as input;
    \item \textit{whole} -- in addition, the whole dynamics of the target mel-spectrogram under forward diffusion $\{Y_{s}|s=0.5/15,1.5/15,..,14.5/15\}$ is used as input.
\end{itemize}

The decoder architecture is based on U-Net \citep{UNet} and is the same as in Grad-TTS but with four times more channels to better capture the whole range of human voices. The speaker conditioning network $g_{t}(Y)$ is composed of $2$D convolutions and MLPs and described in detail in Appendix \ref{app:h}. Its output is $128$-dimensional vector which is broadcast-concatenated to the concatenation of $\hat{X}_{t}$ and $\bar{X}$ as additional $128$ channels.

\subsection{Related VC models}
\label{subsec:related}
To the best of our knowledge, there exist two diffusion-based voice conversion models: VoiceGrad \citep{VoiceGrad} and DiffSVC \citep{DiffSVC}. The one we propose differs from them in several important aspects. First, neither of the mentioned papers considers a one-shot many-to-many voice conversion scenario. Next, these models take no less than $100$ reverse diffusion steps at inference while we pay special attention to reducing the number of iterations (see Section \ref{sec:ml_inference}) achieving good quality with as few as $6$ iterations. Furthermore, VoiceGrad performs voice conversion by running Langevin dynamics starting from the source mel-spectrogram, thus implicitly assuming that forward diffusion trajectories starting from the mel-spectrogram we want to synthesize are likely to pass through the neighborhood of the source mel-spectrogram on their way to Gaussian noise. Such an assumption allowing to have only one network instead of encoder-decoder architecture is too strong and hardly holds for real voices. Finally, DiffSVC performs singing voice conversion and relies on PPGs as speaker-independent speech representation.

\section{Maximum likelihood SDE solver}
\label{sec:ml_inference}

In this section, we develop a fixed-step first-order reverse SDE solver that maximizes the log-likelihood of sample paths of the forward diffusion. This solver differs from general-purpose Euler-Maruyama SDE solver \citep{SDE-numerical} by infinitesimally small values which can however become significant when we sample from diffusion model using a few iterations.

Consider the following forward and reverse SDEs defined in Euclidean space $\mathbb{R}^{n}$ for $t\in [0,1]$:

\begin{equation}
\label{eq:fwd_rev_sde}
    dX_{t}=-\frac{1}{2}\beta_{t}X_{t}dt + \sqrt{\beta_{t}}d\overrightarrow{W_{t}} \ \ (F), \ \ \ d\hat{X}_{t}=\left(-\frac{1}{2}\beta_{t}\hat{X}_{t} - \beta_{t}s_{\theta}(\hat{X}_t, t)\right)dt + \sqrt{\beta_{t}}d\overleftarrow{W_{t}} \ \ (R),
\end{equation}

where $\overrightarrow{W}$ is a forward Wiener process (i.e. its forward increments $\overrightarrow{W_{t}} - \overrightarrow{W_{s}}$ are independent of $\overrightarrow{W_{s}}$ for $t > s$) and $\overleftarrow{W}$ is a backward Wiener process (i.e. backward increments $\overleftarrow{W_{s}} - \overleftarrow{W_{t}}$ are independent of $\overleftarrow{W_{t}}$ for $s < t$). Following \citet{Song-main} we will call DPM (\ref{eq:fwd_rev_sde}) Variance Preserving (VP). For simplicity we will derive maximum likelihood solver for this particular type of diffusion models. The equation (\ref{eq:vc_fwd_sde}) underlying VC diffusion model described in Section \ref{sec:vc_model} can be transformed into the equation (\ref{eq:fwd_rev_sde}-F) by a constant shift and we will call such diffusion models Mean Reverting Variance Preserving (MR-VP). VP model analysis carried out in this section can be easily extended (see Appendices \ref{app:mr_vp}, \ref{app:sub_vp} and \ref{app:ve}) to MR-VP model as well as to other common diffusion model types such as sub-VP and VE described by \citet{Song-main}.

The forward SDE (\ref{eq:fwd_rev_sde}-F) allows for explicit solution:

\begin{equation}
\label{eq:xt_distribution}
    \Law(X_{t} | X_{s}) = \mathcal{N}(\gamma_{s,t}X_{s}, (1-\gamma_{s,t}^2)\ide), \ \ \ \ \gamma_{s,t} = \exp{\left(-\frac{1}{2}\int_{s}^{t}{\beta_{u}du}\right)},
\end{equation}

for all $0\leq s < t \leq 1$. This formula is derived by means of It\^o calculus in Appendix \ref{app:sde_solution}. The reverse SDE (\ref{eq:fwd_rev_sde}-R) parameterized with a neural network $s_{\theta}$ is trained to approximate gradient of the log-density of noisy data $X_{t}$:

\begin{equation}
\label{eq:objective}
    \theta^{*}=\argmin_{\theta}{\int_{0}^{1}{\lambda_{t}\mathbb{E}_{X_{t}}\Vert s_{\theta}(X_{t}, t) - \nabla{\log{p_{t}(X_{t})}}\Vert_{2}^{2}}}dt,
\end{equation}

where the expectation is taken with respect to noisy data distribution $\Law(X_{t})$ with pdf $p_{t}(\cdot)$ and $\lambda_{t}$ is some positive weighting function. Note that certain Lipschitz constraints should be satisfied by coefficients of SDEs (\ref{eq:fwd_rev_sde}) to guarantee existence of strong solutions \citep{SDE-book}, and throughout this section we assume these conditions are satisfied as well as those from \citep{SDE-reverse} which guarantee that paths $\hat{X}$ generated by the reverse SDE (\ref{eq:fwd_rev_sde}-R) for the optimal $\theta^{*}$ equal forward SDE (\ref{eq:fwd_rev_sde}-F) paths $X$ in distribution.

The generative procedure of a VP DPM consists in solving the reverse SDE (\ref{eq:fwd_rev_sde}-R) backwards in time starting from $\hat{X}_{1}\sim \mathcal{N}(0,\ide)$. Common Euler-Maruyama solver introduces discretization error \citep{SDE-numerical} which may harm sample quality when the number of iterations is small. At the same time, it is possible to design unbiased \citep{SDE-unbiased} or even exact \citep{SDE-exact} numerical solvers for some particular SDE types. The Theorem~\ref{th:main} shows that in the case of diffusion models we can make use of the forward diffusion (\ref{eq:fwd_rev_sde}-F) and propose a reverse SDE solver which is better than the general-purpose Euler-Maruyama one in terms of likelihood.

The solver proposed in the Theorem \ref{th:main} is expressed in terms of the values defined as follows:

\begin{equation}
\label{eq:notation_gamma}
    \mu_{s,t}=\gamma_{s,t}\frac{1-\gamma_{0,s}^{2}}{1-\gamma_{0,t}^{2}}, \ \ \ \  \nu_{s,t}=\gamma_{0,s}\frac{1-\gamma_{s,t}^{2}}{1-\gamma_{0,t}^{2}}, \ \ \ \
    \sigma_{s,t}^{2}=\frac{(1-\gamma_{0,s}^{2})(1-\gamma_{s,t}^{2})}{1-\gamma_{0,t}^{2}},
\end{equation}

\begin{equation}
\begin{split}
\label{eq:notation_correction}
     \kappa_{t,h}^{*} =& \frac{\nu_{t-h,t}(1-\gamma_{0,t}^{2})}{\gamma_{0,t}\beta_{t}h} - 1,  \ \ \ \ 
     \omega_{t,h}^{*}=\frac{\mu_{t-h,t}-1}{\beta_{t}h} + \frac{1+\kappa_{t,h}^{*}}{1-\gamma_{0,t}^{2}} -\frac{1}{2},
     \\& (\sigma^{*}_{t,h})^{2} = \sigma^{2}_{t-h,t} + \frac{1}{n}\nu_{t-h,t}^{2}\mathbb{E}_{X_{t}}\left[\Tr{\left(\Var{(X_{0}|X_{t})}\right)}\right],
\end{split}
\end{equation}

where $n$ is data dimensionality, $\Var{(X_{0}|X_{t})}$ is the covariance matrix of the conditional data distribution $\Law(X_{0}|X_{t})$ (so, $\Tr(\Var{(X_{0}|X_{t})})$ is the overall variance across all $n$ dimensions) and the expectation $\mathbb{E}_{X_{t}}[\cdot]$ is taken with respect to the unconditional noisy data distribution $\Law(X_{t})$.

\begin{theorem}
\label{th:main}
    Consider a DPM characterized by SDEs (\ref{eq:fwd_rev_sde}) with reverse diffusion trained till optimality. Let $N\in \mathbb{N}$ be any natural number and $h=1/N$. Consider the following class of fixed step size $h$ reverse SDE solvers parameterized with triplets of real numbers \{$(\hat{\kappa}_{t,h}, \hat{\omega}_{t,h}, \hat{\sigma}_{t,h})|t=h,2h,..,1$\}:
    \begin{equation}
    \label{eq:sde_solver}
        \hat{X}_{t-h} = \hat{X}_{t} + \beta_{t}h\left(\left(\frac{1}{2}+\hat{\omega}_{t,h}\right)\hat{X}_{t} + (1+\hat{\kappa}_{t,h})s_{\theta^{*}}(\hat{X}_{t},t)\right) + \hat{\sigma}_{t,h}\xi_{t},
    \end{equation}
    where $\theta^{*}$ is given by (\ref{eq:objective}), $t=1,1-h,..,h$ and $\xi_{t}$ are i.i.d. samples from $\mathcal{N}(0,\ide)$. Then:

    \begin{enumerate}
        \item[(i)] Log-likelihood of sample paths $X=\{X_{kh}\}_{k=0}^{N}$ under generative model $\hat{X}$ is maximized for $\hat{\kappa}_{t,h}=\kappa_{t,h}^{*}$, $\hat{\omega}_{t,h}=\omega_{t,h}^{*}$ and $\hat{\sigma}_{t,h}=\sigma^{*}_{t,h}$. 
        \item[(ii)] Assume that the SDE solver (\ref{eq:sde_solver}) starts from random variable $\hat{X}_{1}\sim \Law{(X_{1})}$. If $X_{0}$ is a constant or a Gaussian random variable with diagonal isotropic covariance matrix (i.e. $\delta^{2}\ide$ for $\delta>0$), then generative model $\hat{X}$ is exact for $\hat{\kappa}_{t,h}=\kappa_{t,h}^{*}$, $\hat{\omega}_{t,h}=\omega_{t,h}^{*}$ and $\hat{\sigma}_{t,h}=\sigma^{*}_{t,h}$.
    \end{enumerate}
\end{theorem}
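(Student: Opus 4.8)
The plan is to exploit the fact that a single step of the solver (\ref{eq:sde_solver}) is an affine--Gaussian transition kernel, so the whole scheme is a backward Markov chain whose transition densities can be compared, step by step, with the true reverse transitions of the forward diffusion. Writing $m_{t,h}(\hat{X}_t)$ for the deterministic part of the update (\ref{eq:sde_solver}), the conditional law $\Law(\hat{X}_{t-h}\mid\hat{X}_t)$ is $\mathcal{N}(m_{t,h}(\hat{X}_t),\hat{\sigma}_{t,h}^2\ide)$. Because the generative chain is Markov, the log-likelihood it assigns to a forward path factorizes as an initial term $\log p_{\hat{X}_1}(X_1)$ (independent of the triplets) plus a sum of transition terms; taking the expectation over forward paths then decouples the objective into $N$ summands, the $k$-th depending only on the triplet at time $t=kh$ and on the forward joint law $\Law(X_{t-h},X_t)$. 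I would therefore interpret ``maximizing the log-likelihood'' as maximizing $\mathbb{E}_{X}[\log p_{\hat{X}}(X)]$ and maximize each summand $\mathbb{E}_{X_{t-h},X_t}[\log p_{\hat{X}_{t-h}\mid\hat{X}_t}(X_{t-h}\mid X_t)]$ separately.

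For each step I would first identify the true reverse transition of the forward SDE. Conditioning additionally on $X_0$ and applying Bayes' rule to the explicit Gaussian kernels (\ref{eq:xt_distribution}), then completing the square and using $\gamma_{0,s}\gamma_{s,t}=\gamma_{0,t}$, yields $\Law(X_{t-h}\mid X_t,X_0)=\mathcal{N}(\mu_{t-h,t}X_t+\nu_{t-h,t}X_0,\sigma_{t-h,t}^2\ide)$ with the coefficients (\ref{eq:notation_gamma}). Marginalizing over $\Law(X_0\mid X_t)$ gives $\mathbb{E}[X_{t-h}\mid X_t]=\mu_{t-h,t}X_t+\nu_{t-h,t}\mathbb{E}[X_0\mid X_t]$, and by the law of total variance $\Var(X_{t-h}\mid X_t)=\sigma_{t-h,t}^2\ide+\nu_{t-h,t}^2\Var(X_0\mid X_t)$. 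The link to the solver is Tweedie's formula: since $s_{\theta^*}(X_t,t)=\nabla\log p_t(X_t)$ at optimality and $\Law(X_t\mid X_0)$ is Gaussian, one gets $\mathbb{E}[X_0\mid X_t]=\gamma_{0,t}^{-1}(X_t+(1-\gamma_{0,t}^2)s_{\theta^*}(X_t,t))$, so $\mathbb{E}[X_{t-h}\mid X_t]$ is an affine function of $X_t$ and $s_{\theta^*}(X_t,t)$. For fixed $\hat{\sigma}$ the inner expectation is minimized by the unconstrained $L_2$ projection $m_{t,h}(\cdot)=\mathbb{E}[X_{t-h}\mid X_t]$; as this mean already lies in the affine span of $X_t$ and $s_{\theta^*}(X_t,t)$, the constrained optimum coincides with it, and matching coefficients against the drift of (\ref{eq:sde_solver}) reproduces exactly $\hat{\kappa}_{t,h}=\kappa_{t,h}^*$ and $\hat{\omega}_{t,h}=\omega_{t,h}^*$. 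With the mean fixed, differentiating the Gaussian log-density in $\hat{\sigma}^2$ gives $\hat{\sigma}_{t,h}^2=\tfrac{1}{n}\mathbb{E}[\Tr(\Var(X_{t-h}\mid X_t))]$, which by the total-variance identity equals $(\sigma_{t,h}^*)^2$, completing part (i).

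For part (ii) I would observe that the only obstruction to exactness is that the isotropic covariance $\hat{\sigma}^2\ide$ can merely trace-match the generally anisotropic $\Var(X_{t-h}\mid X_t)$. When $X_0$ is constant, $\Var(X_0\mid X_t)=0$ and the true reverse transition is exactly $\mathcal{N}(\mathbb{E}[X_{t-h}\mid X_t],\sigma_{t-h,t}^2\ide)$, the starred kernel. When $X_0\sim\mathcal{N}(\cdot,\delta^2\ide)$, joint Gaussianity makes $\Law(X_{t-h}\mid X_t)$ Gaussian and conjugacy makes $\Var(X_0\mid X_t)$ isotropic and independent of $X_t$, so again the reverse transition equals $\mathcal{N}(\mathbb{E}[X_{t-h}\mid X_t],(\sigma_{t,h}^*)^2\ide)$. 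Starting from $\hat{X}_1\sim\Law(X_1)$ and applying these exact transitions, a one-line induction gives $\Law(\hat{X}_{kh})=\Law(X_{kh})$ for all $k$.

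I expect the crux to be the verification that the unconstrained conditional mean $\mathbb{E}[X_{t-h}\mid X_t]$ lands inside the two-parameter affine family realizable by (\ref{eq:sde_solver}); this is precisely what turns an abstract $L_2$ projection into an attainable optimum and pins down $\kappa^*,\omega^*$ uniquely, and it is the role of Tweedie's formula. Some care is also needed with the variance: one must resist expecting exactness in general and instead recognize that an isotropic-Gaussian likelihood forces only the per-coordinate \emph{average} variance to match, which is exactly the extra term $\tfrac{1}{n}\nu_{t-h,t}^2\mathbb{E}[\Tr(\Var(X_0\mid X_t))]$ separating $(\sigma_{t,h}^*)^2$ from $\sigma_{t-h,t}^2$ and explaining why part (ii) requires the isotropy hypothesis.
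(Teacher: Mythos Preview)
Your proposal is correct and follows essentially the same route as the paper: both decompose the expected log-likelihood into per-step cross-entropy terms, use the identity $s_{\theta^*}(x,t)=-\frac{1}{1-\gamma_{0,t}^2}(x-\gamma_{0,t}\mathbb{E}[X_0\mid X_t=x])$ (your ``Tweedie's formula'' is the paper's Lemma~\ref{lm:net_optimal}) to show that the target conditional mean $\mathbb{E}[X_{t-h}\mid X_t]=\mu_{t-h,t}X_t+\nu_{t-h,t}\mathbb{E}[X_0\mid X_t]$ lies in the affine family realized by the solver, match coefficients to obtain $\kappa^*,\omega^*$, and then optimize the scalar variance. Your use of the law of total variance and the $L_2$-projection viewpoint packages the paper's explicit expansion of $S_k$ more cleanly, and your argument for (ii) (isotropy and $X_t$-independence of $\Var(X_0\mid X_t)$ force the true reverse kernel to coincide with the starred Gaussian) is exactly what the paper verifies by direct computation.
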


The Theorem \ref{th:main} provides an improved DPM sampling scheme which comes at no additional computational cost compared to standard methods (except for data-dependent term in $\sigma^{*}$ as discussed in Section \ref{subsec:sampling}) and requires neither model re-training nor extensive search on noise schedule space. The proof of this theorem is given in Appendix \ref{app:th_proof}. Note that it establishes optimality of the reverse SDE solver (\ref{eq:sde_solver}) with the parameters ($\ref{eq:notation_correction}$) in terms of likelihood of \textit{discrete} paths $X=\{X_{kh}\}_{k=0}^{N}$ while the optimality of \textit{continuous} model (\ref{eq:fwd_rev_sde}-R) on \textit{continuous} paths $\{X_{t}\}_{t\in[0,1]}$ is guaranteed for a model with parameters $\theta=\theta^{*}$ as shown in \citep{Song-main}.

The class of reverse SDE solvers considered in the Theorem \ref{th:main} is rather broad: it is the class of all fixed-step solvers whose increments at time $t$ are linear combination of $\hat{X}_{t}$, $s_{\theta}(\hat{X}_{t},t)$ and Gaussian noise with zero mean and diagonal isotropic covariance matrix. As a particular case it includes Euler-Maruyama solver ($\hat{\kappa}_{t,h}\equiv 0$, $\hat{\omega}_{t,h}\equiv 0$, $\hat{\sigma}_{t,h}\equiv\sqrt{\beta_{t}h}$) and for fixed $t$ and $h\to 0$ we have $\kappa^{*}_{t,h}=\bar{o}(1)$, $\omega^{*}_{t,h}=\bar{o}(1)$ and $\sigma^{*}_{t,h}=\sqrt{\beta_{t}h}(1 + \bar{o}(1))$ (the proof is given in Appendix \ref{app:asymptotics}), so the optimal SDE solver significantly differs from general-purpose Euler-Maruyama solver only when $N$ is rather small or $t$ has the same order as $h$, i.e. on the final steps of DPM inference. Appendix \ref{app:toy} contains toy examples demonstrating the difference of the proposed optimal SDE solver and Euler-Maruyama one depending on step size.

The result \textit{(ii)} from the Theorem \ref{th:main} strengthens the result \textit{(i)} for some particular data distributions, but it may seem useless since in practice data distribution is far from being constant or Gaussian. However, in case of generation with strong conditioning (e.g. mel-spectrogram inversion) the assumptions on the data distribution may become viable: in the limiting case when our model is conditioned on $c=\psi(X_{0})$ for an injective function $\psi$, random variable $X_{0}|c$ becomes a constant $\psi^{-1}(c)$.

\section{Experiments}
\label{sec:exp}
We trained two groups of models: \textit{Diff-VCTK} models on VCTK \citep{VCTK} dataset containing $109$ speakers ($9$ speakers were held out for testing purposes) and \textit{Diff-LibriTTS} models on LibriTTS \citep{LibriTTS} containing approximately $1100$ speakers ($10$ speakers were held out). For every model both encoder and decoder were trained on the same dataset. Training hyperparameters, implementation and data processing details can be found in Appendix \ref{app:details}. For mel-spectrogram inversion, we used the pre-trained universal HiFi-GAN vocoder \citep{HiFi-GAN} operating at $22.05$kHz. All subjective human evaluation was carried out on Amazon Mechanical Turk (AMT) with Master assessors to ensure the reliability of the obtained Mean Opinion Scores (MOS). In all AMT tests we considered unseen-to-unseen conversion with $25$ unseen (for both \textit{Diff-VCTK} and \textit{Diff-LibriTTS}) speakers: $9$ VCTK speakers, $10$ LibriTTS speakers and $6$ internal speakers. For VCTK source speakers we also ensured that source phrases were unseen during training. We place other details of listening AMT tests in Appendix \ref{app:subj}. A small subset of speech samples used in them is available at our demo page \url{https://diffvc-fast-ml-solver.github.io} which we encourage to visit.

\begin{table}
\caption{Input types for speaker conditioning $g_{t}(Y)$ compared in terms of speaker similarity.}
\begin{center}
\begin{tabular}{|c|c|c|c|c|c|c|}
\hline
&
\multicolumn{3}{c|}{\begin{tabular}[c]{@{}c@{}c@{}}\textit{Diff-LibriTTS}\end{tabular}} &
\multicolumn{3}{c|}{\begin{tabular}[c]{@{}c@{}c@{}}\textit{Diff-VCTK}\end{tabular}} \\ \cline{2-7}
&\textit{d-only} &\textit{wodyn} &\textit{whole} &\textit{d-only} &\textit{wodyn} &\textit{whole} \\ \hline
Most similar &$27.0\%$ &$\mathbf{38.0\%}$ &$34.1\%$ &$27.2\%$ &$\mathbf{46.7\%}$ &$23.6\%$\\ \hline
Least similar &$\mathbf{28.9\%}$ &$29.3\%$ &$38.5\%$ &$25.3\%$ &$\mathbf{23.9\%}$ &$48.6\%$\\ \hline
\end{tabular}
\end{center}
\label{tab:conditioning}
\end{table}

As for sampling, we considered the following class of reverse SDE solvers:
\begin{equation}
\label{eq:vc_solver}
    \hat{X}_{t-h} = \hat{X}_{t} + \beta_{t}h\left(\left(\frac{1}{2}+\hat{\omega}_{t,h}\right)(\hat{X}_{t} - \bar{X}) + (1+\hat{\kappa}_{t,h})s_{\theta}(\hat{X}_{t},\bar{X},g_{t}(Y),t)\right) + \hat{\sigma}_{t,h}\xi_{t},
\end{equation}

where $t=1,1-h,..,h$ and $\xi_{t}$ are i.i.d. samples from $\mathcal{N}(0,\ide)$. For $\hat{\kappa}_{t,h}=\kappa^{*}_{t,h}$, $\hat{\omega}_{t,h}=\omega^{*}_{t,h}$ and $\hat{\sigma}_{t,h}=\sigma^{*}_{t,h}$ (where $\kappa^{*}_{t,h}$, $\omega^{*}_{t,h}$ and $\sigma^{*}_{t,h}$ are given by (\ref{eq:notation_correction})) it becomes maximum likelihood reverse SDE solver for MR-VP DPM (\ref{eq:vc_fwd_sde}-\ref{eq:vc_rev_sde}) as shown in Appendix \ref{app:mr_vp}. In practice it is not trivial to estimate variance of the conditional distribution $\Law{(X_{0}|X_{t})}$, so we skipped this term in $\sigma^{*}_{t,h}$ assuming this variance to be rather small because of strong conditioning on $g_{t}(Y)$ and just used $\hat{\sigma}_{t,h}=\sigma_{t-h,t}$ calling this sampling method \textit{ML-N} ($N=1/h$ is the number of SDE solver steps). We also experimented with Euler-Maruyama solver \textit{EM-N} (i.e. $\hat{\kappa}_{t,h}=0$, $ \hat{\omega}_{t,h}=0$, $\hat{\sigma}_{t,h}=\sqrt{\beta_{t}h}$) and ``probability flow sampling'' from \citep{Song-main} which we denote by \textit{PF-N} ($\hat{\kappa}_{t,h}=-0.5$, $ \hat{\omega}_{t,h}=0$, $\hat{\sigma}_{t,h}=0$).

\subsection{Speaker conditioning analysis}
\label{subsec:ablation}

For each dataset we trained three models -- one for each input type for the speaker conditioning network $g_{t}(Y)$ (see Section \ref{subsec:decoder}). Although these input types had much influence neither on speaker similarity nor on speech naturalness, we did two experiments to choose the best models (one for each training dataset) in terms of speaker similarity for further comparison with baseline systems. We compared voice conversion results (produced by \textit{ML-30} sampling scheme) on $92$ source-target pairs. AMT workers were asked which of three models (if any) sounded most similar to the target speaker and which of them (if any) sounded least similar. For \textit{Diff-VCTK} and \textit{Diff-LibriTTS} models each conversion pair was evaluated $4$ and $5$ times respectively. Table~\ref{tab:conditioning} demonstrates that for both  \textit{Diff-VCTK} and \textit{Diff-LibriTTS} the best option is \textit{wodyn}, i.e. to condition the decoder at time $t$ on the speaker embedding together with the noisy target mel-spectrogram $Y_{t}$. Conditioning on $Y_{t}$ allows making use of diffusion-specific information of how the noisy target sounds whereas embedding from the pre-trained speaker verification network contains information only about the clean target. Taking these results into consideration, we used \textit{Diff-VCTK-wodyn} and \textit{Diff-LibriTTS-wodyn} in the remaining experiments.

\subsection{Any-to-any voice conversion}
\label{subsec:any2any}

\begin{table}
\caption{Subjective evaluation (MOS) of one-shot VC models trained on VCTK. Ground truth recordings were evaluated only for VCTK speakers.}
\begin{center}
\begin{tabular}{|c|c|c|c|c|}
\hline
&
\multicolumn{2}{c|}{\begin{tabular}[c]{@{}c@{}}VCTK test ($9$ speakers, $54$ pairs)\end{tabular}} &
\multicolumn{2}{c|}{\begin{tabular}[c]{@{}c@{}}Whole test ($25$ speakers, $350$ pairs)\end{tabular}} \\ \cline{2-5}
&\ \ \ Naturalness \ \ \ &Similarity &\ \ \ Naturalness \ \ \ &Similarity \\ \hline
\textit{AGAIN-VC} &$1.98\pm 0.05$ &$1.97\pm 0.08$ &$1.87\pm 0.03$ &$1.75\pm 0.04$\\ \hline
\textit{FragmentVC} &$2.20\pm 0.06$ &$2.45\pm 0.09$ &$1.91\pm 0.03$ &$1.93\pm 0.04$\\ \hline
\textit{VQMIVC} &$2.89\pm 0.06$ &$2.60\pm 0.10$ &$2.48\pm 0.04$ &$1.95\pm 0.04$\\ \hline
\textit{Diff-VCTK-ML-6} &$\mathbf{3.73\pm 0.06}$ &$3.47\pm 0.09$ &$3.39\pm 0.04$ &$\mathbf{2.69\pm 0.05}$\\ \hline
\textit{Diff-VCTK-ML-30} &$\mathbf{3.73\pm 0.06}$ &$\mathbf{3.57\pm 0.09}$ &$\mathbf{3.44\pm 0.04}$ &$\mathbf{2.71\pm 0.05}$\\ \hline
\textit{Ground truth} &$4.55\pm 0.05$ &$4.52\pm 0.07$ &$4.55\pm 0.05$ &$4.52\pm 0.07$\\ \hline
\end{tabular}
\end{center}
\label{tab:main_vctk}
\end{table}

\begin{table}
\caption{Subjective evaluation (MOS) of one-shot VC models trained on large-scale datasets.}
\begin{center}
\begin{tabular}{|c|c|c|c|c|}
\hline
&
\multicolumn{2}{c|}{\begin{tabular}[c]{@{}c@{}}VCTK test ($9$ speakers, $54$ pairs)\end{tabular}} &
\multicolumn{2}{c|}{\begin{tabular}[c]{@{}c@{}}Whole test ($25$ speakers, $350$ pairs)\end{tabular}} \\ \cline{2-5}
&\ \ \ Naturalness \ \ \ \  &Similarity &\ \ \ Naturalness \ \ \ \ &Similarity \\ \hline
\textit{Diff-LibriTTS-EM-6} &$1.68\pm 0.06$ &$1.53\pm 0.07$ &$1.57\pm 0.02$ &$1.47\pm 0.03$\\ \hline
\textit{Diff-LibriTTS-PF-6} &$3.11\pm 0.07$ &$2.58\pm 0.11$ &$2.99\pm 0.03$ &$2.50\pm 0.04$\\ \hline
\textit{Diff-LibriTTS-ML-6} &$3.84\pm 0.08$ &$3.08\pm 0.11$ &$3.80\pm 0.03$ &$3.27\pm 0.05$\\ \hline
\textit{Diff-LibriTTS-ML-30} &$\mathbf{3.96\pm 0.08}$ &$3.23\pm 0.11$ &$\mathbf{4.02\pm 0.03}$ &$\mathbf{3.39\pm 0.05}$\\ \hline
\textit{BNE-PPG-VC} &$\mathbf{3.95\pm 0.08}$ &$\mathbf{3.27\pm 0.12}$ &$3.83\pm 0.03$ &$3.03\pm 0.05$\\ \hline
\end{tabular}
\end{center}
\label{tab:main_large}
\end{table}

We chose four recently proposed VC models capable of one-shot many-to-many synthesis as the baselines: 
\begin{itemize}
    \item \textit{AGAIN-VC} \citep{AGAIN}, an improved version of a conventional autoencoder AdaIN-VC solving the disentanglement problem by means of instance normalization;
    \item \textit{FragmentVC} \citep{FragmentVC}, an attention-based model relying on wav2vec 2.0 \citep{Wav2Vec} to obtain speech content from the source utterance;
    \item \textit{VQMIVC} \citep{VQMIVC}, state-of-the-art approach among those employing vector quantization techniques;
    \item \textit{BNE-PPG-VC} \citep{PPG-VC}, an improved variant of PPG-based VC models combining a bottleneck feature extractor obtained from a phoneme recognizer with a seq2seq-based synthesis module.
\end{itemize}
    
As shown in \citep{Assem-VC}, PPG-based VC models provide high voice conversion quality competitive even with that of the state-of-the-art VC models taking text transcription corresponding to the source utterance as input. Therefore, we can consider \textit{BNE-PPG-VC} a state-of-the-art model in our setting.

Baseline voice conversion results were produced by the pre-trained VC models provided in official GitHub repositories. Since only \textit{BNE-PPG-VC} has the model pre-trained on a large-scale dataset (namely, LibriTTS + VCTK), we did two subjective human evaluation tests: the first one comparing \textit{Diff-VCTK} with \textit{AGAIN-VC}, \textit{FragmentVC} and \textit{VQMIVC} trained on VCTK and the second one comparing \textit{Diff-LibriTTS} with \textit{BNE-PPG-VC}. The results of these tests are given in Tables \ref{tab:main_vctk} and \ref{tab:main_large} respectively. Speech naturalness and speaker similarity were assessed separately. AMT workers evaluated voice conversion quality on $350$ source-target pairs on $5$-point scale. In the first test, each pair was assessed $6$ times on average both in speech naturalness and speaker similarity evaluation; as for the second one, each pair was assessed $8$ and $9$ times on average in speech naturalness and speaker similarity evaluation correspondingly. No less than $41$ unique assessors took part in each test.

Table~\ref{tab:main_vctk} demonstrates that our model performs significantly better than the baselines both in terms of naturalness and speaker similarity even when $6$ reverse diffusion iterations are used. Despite working almost equally well on VCTK speakers, the best baseline \textit{VQMIVC} shows poor performance on other speakers perhaps because of not being able to generalize to different domains with lower recording quality. Although \textit{Diff-VCTK} performance also degrades on non-VCTK speakers, it achieves good speaker similarity of MOS $3.6$ on VCTK ones when \textit{ML-30} sampling scheme is used and only slightly worse MOS $3.5$ when $5$x less iterations are used at inference.

\begin{table}
\caption{Reverse SDE solvers compared in terms of FID. $N$ is the number of SDE solver steps.}
\begin{center}
\begin{tabular}{|c|c|c|c|c|c|c|}
\hline
&
\multicolumn{2}{c|}{\begin{tabular}[c]{@{}c@{}}VP DPM\end{tabular}} &
\multicolumn{2}{c|}{\begin{tabular}[c]{@{}c@{}}sub-VP DPM\end{tabular}} &
\multicolumn{2}{c|}{\begin{tabular}[c]{@{}c@{}}VE DPM\end{tabular}} \\ \cline{2-7}
&$N$=$10$ &$N$=$100$ &$N$=$10$ &$N$=$100$ &$N$=$10$ &$N$=$100$ \\ \hline
Euler-Maruyama &$229.6$ &$19.68$ &$312.3$ &$19.83$ &$462.1$ &$24.77$\\ \hline
Reverse Diffusion &$679.8$ &$65.95$ &$312.2$ &$19.74$ &$461.1$ &$303.2$\\ \hline
Probability Flow &$88.92$ &$5.70$ &$64.22$ &$\mathbf{4.42}$ &$495.3$ &$214.5$\\ \hline
Ancestral Sampling &$679.8$ &$68.35$ &--- &--- &$454.7$ &$17.83$ \\ \hline
Maximum Likelihood ($\tau=0.1$) &$260.3$ &$\mathbf{4.34}$ &$317.0$ &$6.63$ &$461.9$ &$23.63$\\ \hline
Maximum Likelihood ($\tau=0.5$) &$\mathbf{24.45}$ &$7.82$ &$\mathbf{30.90}$ &$6.43$ &$462.0$ &$\mathbf{10.07}$\\ \hline
Maximum Likelihood ($\tau=1.0$) &$41.78$ &$7.94$ &$48.02$ &$6.51$ &$\mathbf{48.51}$ &$12.37$\\ \hline
\end{tabular}
\end{center}
\label{tab:sampling}
\end{table}

Table~\ref{tab:main_large} contains human evaluation results of \textit{Diff-LibriTTS} for four sampling schemes: \textit{ML-30} with $30$ reverse SDE solver steps and \textit{ML-6}, \textit{EM-6} and \textit{PF-6} with $6$ steps of reverse diffusion. The three schemes taking $6$ steps achieved real-time factor (RTF) around $0.1$ on GPU (i.e. inference was $10$ times faster than real time) while the one taking $30$ steps had RTF around $0.5$. The proposed model \textit{Diff-LibriTTS-ML-30} and the baseline \textit{BNE-PPG-VC} show the same performance on the VCTK test set in terms of speech naturalness the latter being slightly better in terms of speaker similarity which can perhaps be explained by the fact that \textit{BNE-PPG-VC} was trained on the union of VCTK and LibriTTS whereas our model was trained only on LibriTTS. As for the whole test set containing unseen LibriTTS and internal speakers also, \textit{Diff-LibriTTS-ML-30} outperforms \textit{BNE-PPG-VC} model achieving MOS $4.0$ and $3.4$ in terms of speech naturalness and speaker similarity respectively. Due to employing PPG extractor trained on a large-scale ASR dataset LibriSpeech \citep{LibriSpeech}, \textit{BNE-PPG-VC} has fewer mispronunciation issues than our model but synthesized speech suffers from more sonic artifacts. This observation makes us believe that incorporating PPG features in the proposed diffusion VC framework is a promising direction for future research.

Table~\ref{tab:main_large} also demonstrates the benefits of the proposed maximum likelihood sampling scheme over other sampling methods for a small number of inference steps: only \textit{ML-N} scheme allows us to use as few as $N=6$ iterations with acceptable quality degradation of MOS $0.2$ and $0.1$ in terms of naturalness and speaker similarity respectively while two other competing methods lead to much more significant quality degradation.

\subsection{Maximum likelihood sampling}
\label{subsec:sampling}

\begin{figure*}[ht!]
    \includegraphics[width=.23\textwidth]{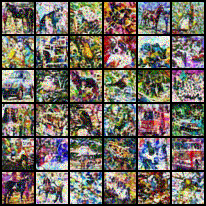}\hfill
    \includegraphics[width=.23\textwidth]{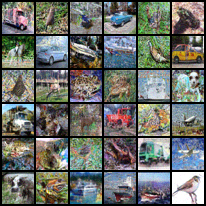}\hfill
    \includegraphics[width=.23\textwidth]{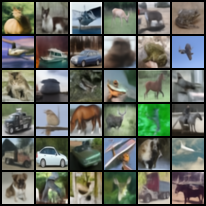}\hfill
    \includegraphics[width=.23\textwidth]{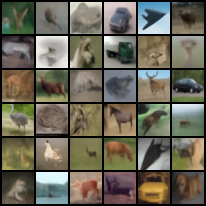}
\caption{CIFAR-$10$ images randomly sampled from VP DPM by running $10$ reverse diffusion steps with the following schemes (from left to right): ``euler-maruyama'', ``probability flow'', ``maximum likelihood ($\tau=0.5$)'', ``maximum likelihood ($\tau=1.0$)''.}
\label{pic:cifar}
\end{figure*}

To show that the maximum likelihood sampling scheme proposed in Section \ref{sec:ml_inference} generalizes to different tasks and DPM types, we took the models trained by \citet{Song-main} on CIFAR-$10$ image generation task and compared our method with other sampling schemes described in that paper in terms of Fr\'{e}chet Inception Distance (FID).

The main difficulty in applying maximum likelihood SDE solver is estimating data-dependent term $\mathbb{E}[\Tr{(\Var{(X_{0}|X_{t})})}]$ in $\sigma^{*}_{t,h}$. Although in the current experiments we just set this term to zero, we can think of two possible ways to estimate it: (i) approximate $\Var{(X_{0}|X_{t})}$ with $\Var{(\hat{X}_{0}|\hat{X}_{t}=X_{t})}$: sample noisy data $X_{t}$, solve reverse SDE with sufficiently small step size starting from terminal condition $\hat{X}_{t}=X_{t}$ several times, and calculate sample variance of the resulting solutions at initial points $\hat{X}_{0}$; (ii) use the formula (\ref{eq:gaussian_revert}) from Appendix \ref{app:th_proof} to calculate $\Var{(X_{0}|X_{t})}$ assuming that $X_{0}$ is distributed normally with mean and variance equal to sample mean and sample variance computed on the training dataset. Experimenting with these techniques and exploring new ones seems to be an interesting future research direction.

Another important practical consideration is that the proposed scheme is proven to be optimal only for score matching networks trained till optimality. Therefore, in the experiments whose results are reported in Table~\ref{tab:sampling} we apply maximum likelihood sampling scheme only when $t\leq\tau$ while using standard Euler-Maruyama solver for $t>\tau$ for some hyperparameter $\tau\in [0,1]$. Such a modification relies on the assumption that score matching network is closer to being optimal for smaller noise. 

Table \ref{tab:sampling} shows that despite likelihood and FID are two metrics that do not perfectly correlate \citep{DPM-ML-training}, in most cases our maximum likelihood SDE solver performs best in terms of FID. Also, it is worth mentioning that although $\tau=1$ is always rather a good choice, tuning this hyperparameter can lead to even better performance. One can find randomly chosen generated images for various sampling methods in Figure~\ref{pic:cifar}.

\section{Conclusion}
\label{sec:outro}
In this paper, the novel one-shot many-to-many voice conversion model has been presented. Its encoder design and powerful diffusion-based decoder made it possible to achieve good results both in terms of speaker similarity and speech naturalness even on out-of-domain unseen speakers. Subjective human evaluation verified that the proposed model delivers scalable VC solution with competitive performance. Furthermore, aiming at fast synthesis, we have developed and theoretically justified the novel sampling scheme. The main idea behind it is to modify the general-purpose Euler-Maruyama SDE solver so as to maximize the likelihood of discrete sample paths of the forward diffusion. Due to the proposed sampling scheme, our VC model is capable of high-quality voice conversion with as few as $6$ reverse diffusion steps. Moreover, experiments on the image generation task show that all known diffusion model types can benefit from the proposed SDE solver.

\bibliography{diffusion_paper}

\begin{thebibliography}{44}
\providecommand{\natexlab}[1]{#1}
\providecommand{\url}[1]{\texttt{#1}}
\expandafter\ifx\csname urlstyle\endcsname\relax
  \providecommand{\doi}[1]{doi: #1}\else
  \providecommand{\doi}{doi: \begingroup \urlstyle{rm}\Url}\fi

\bibitem[Anderson(1982)]{SDE-reverse}
Brian~D.O. Anderson.
\newblock {Reverse-time Diffusion Equation Models}.
\newblock \emph{Stochastic Processes and their Applications}, 12\penalty0
  (3):\penalty0 313 -- 326, 1982.
\newblock ISSN 0304-4149.

\bibitem[Baevski et~al.(2020)Baevski, Zhou, Mohamed, and Auli]{Wav2Vec}
Alexei Baevski, Yuhao Zhou, Abdelrahman Mohamed, and Michael Auli.
\newblock {wav2vec 2.0: A Framework for Self-Supervised Learning of Speech
  Representations}.
\newblock In \emph{Advances in Neural Information Processing Systems},
  volume~33, pp.\  12449--12460. Curran Associates, Inc., 2020.

\bibitem[Beskos \& Roberts(2005)Beskos and Roberts]{SDE-exact}
Alexandros Beskos and Gareth~O. Roberts.
\newblock {Exact Simulation of Diffusions}.
\newblock \emph{The Annals of Applied Probability}, 15\penalty0 (4):\penalty0
  2422--2444, 2005.
\newblock ISSN 10505164.

\bibitem[Chen et~al.(2021{\natexlab{a}})Chen, Zhang, Zen, Weiss, Norouzi, and
  Chan]{WaveGrad}
Nanxin Chen, Yu~Zhang, Heiga Zen, Ron~J Weiss, Mohammad Norouzi, and William
  Chan.
\newblock {WaveGrad: Estimating Gradients for Waveform Generation}.
\newblock In \emph{International Conference on Learning Representations},
  2021{\natexlab{a}}.

\bibitem[Chen et~al.(2021{\natexlab{b}})Chen, Wu, Wu, and yi~Lee]{AGAIN}
Yen-Hao Chen, D.~Wu, Tsung-Han Wu, and Hung yi~Lee.
\newblock {Again-VC: A One-Shot Voice Conversion Using Activation Guidance and
  Adaptive Instance Normalization}.
\newblock In \emph{ICASSP 2021 - 2021 IEEE International Conference on
  Acoustics, Speech and Signal Processing (ICASSP)}, pp.\  5954--5958,
  2021{\natexlab{b}}.

\bibitem[Chou \& Lee(2019)Chou and Lee]{AdaIN}
Ju{-}Chieh Chou and Hung{-}yi Lee.
\newblock {One-Shot Voice Conversion by Separating Speaker and Content
  Representations with Instance Normalization}.
\newblock In \emph{Interspeech 2019, 20th Annual Conference of the
  International Speech Communication Association, Graz, Austria, 15-19
  September 2019}, pp.\  664--668. {ISCA}, 2019.

\bibitem[Henry-Labordère et~al.(2017)Henry-Labordère, Tan, and
  Touzi]{SDE-unbiased}
Pierre Henry-Labordère, Xiaolu Tan, and Nizar Touzi.
\newblock {Unbiased Simulation of Stochastic Differential Equations}.
\newblock \emph{The Annals of Applied Probability}, 27\penalty0 (6):\penalty0
  3305--3341, 2017.
\newblock ISSN 10505164.

\bibitem[Ho et~al.(2020)Ho, Jain, and Abbeel]{DDPM}
Jonathan Ho, Ajay Jain, and Pieter Abbeel.
\newblock {Denoising Diffusion Probabilistic Models}.
\newblock In \emph{Advances in Neural Information Processing Systems 33: Annual
  Conference on Neural Information Processing Systems 2020, NeurIPS 2020,
  December 6-12, 2020, virtual}, volume~33. Curran Associates, Inc., 2020.

\bibitem[Hyv{{\"a}}rinen(2005)]{Score-Matching1}
Aapo Hyv{{\"a}}rinen.
\newblock {Estimation of Non-Normalized Statistical Models by Score Matching}.
\newblock \emph{Journal of Machine Learning Research}, 6\penalty0
  (24):\penalty0 695--709, 2005.

\bibitem[Ishihara \& Saito(2020)Ishihara and Saito]{AttentionEmbeds}
Tatsuma Ishihara and Daisuke Saito.
\newblock {Attention-Based Speaker Embeddings for One-Shot Voice Conversion}.
\newblock In \emph{Interspeech 2020, 21st Annual Conference of the
  International Speech Communication Association, Virtual Event, Shanghai,
  China, 25-29 October 2020}, pp.\  806--810. {ISCA}, 2020.

\bibitem[Jeong et~al.(2021)Jeong, Kim, Cheon, Choi, and Kim]{DiffTTS}
Myeonghun Jeong, Hyeongju Kim, Sung~Jun Cheon, Byoung~Jin Choi, and Nam~Soo
  Kim.
\newblock {Diff-TTS: A Denoising Diffusion Model for Text-to-Speech}.
\newblock In \emph{Proc. Interspeech 2021}, pp.\  3605--3609, 2021.

\bibitem[Jia et~al.(2018)Jia, Zhang, Weiss, Wang, Shen, Ren, Chen, Nguyen,
  Pang, Lopez~Moreno, and Wu]{dvector}
Ye~Jia, Yu~Zhang, Ron Weiss, Quan Wang, Jonathan Shen, Fei Ren, Zhifeng Chen,
  Patrick Nguyen, Ruoming Pang, Ignacio Lopez~Moreno, and Yonghui Wu.
\newblock {Transfer Learning from Speaker Verification to Multispeaker
  Text-To-Speech Synthesis}.
\newblock In \emph{Advances in Neural Information Processing Systems 31}, pp.\
  4480--4490. Curran Associates, Inc., 2018.

\bibitem[Kameoka et~al.(2020)Kameoka, Kaneko, Tanaka, Hojo, and
  Seki]{VoiceGrad}
Hirokazu Kameoka, Takuhiro Kaneko, Kou Tanaka, Nobukatsu Hojo, and Shogo Seki.
\newblock {VoiceGrad: Non-Parallel Any-to-Many Voice Conversion with Annealed
  Langevin Dynamics}, 2020.

\bibitem[Kim et~al.(2021)Kim, Park, and Joe]{Assem-VC}
{Kang-wook} Kim, {Seung-won} Park, and {Myun-chul} Joe.
\newblock {Assem-VC: Realistic Voice Conversion by Assembling Modern Speech
  Synthesis Techniques}, 2021.

\bibitem[Kingma et~al.(2021)Kingma, Salimans, Poole, and Ho]{VDPM}
Diederik~P. Kingma, Tim Salimans, Ben Poole, and Jonathan Ho.
\newblock {Variational Diffusion Models}, 2021.

\bibitem[Kloeden \& Platen(1992)Kloeden and Platen]{SDE-numerical}
Peter~E. Kloeden and Eckhard Platen.
\newblock \emph{{Numerical Solution of Stochastic Differential Equations}},
  volume~23 of \emph{Stochastic Modelling and Applied Probability}.
\newblock Springer-Verlag Berlin Heidelberg, 1992.

\bibitem[Kong et~al.(2020)Kong, Kim, and Bae]{HiFi-GAN}
Jungil Kong, Jaehyeon Kim, and Jaekyoung Bae.
\newblock {HiFi-GAN: Generative Adversarial Networks for Efficient and High
  Fidelity Speech Synthesis}.
\newblock In \emph{Advances in Neural Information Processing Systems 33: Annual
  Conference on Neural Information Processing Systems 2020, NeurIPS 2020,
  December 6-12, virtual}, 2020.

\bibitem[Kong \& Ping(2021)Kong and Ping]{FastSampling}
Zhifeng Kong and Wei Ping.
\newblock {On Fast Sampling of Diffusion Probabilistic Models}.
\newblock In \emph{ICML Workshop on Invertible Neural Networks, Normalizing
  Flows, and Explicit Likelihood Models}, 2021.

\bibitem[Kong et~al.(2021)Kong, Ping, Huang, Zhao, and Catanzaro]{DiffWave}
Zhifeng Kong, Wei Ping, Jiaji Huang, Kexin Zhao, and Bryan Catanzaro.
\newblock {DiffWave: A Versatile Diffusion Model for Audio Synthesis}.
\newblock In \emph{International Conference on Learning Representations}, 2021.

\bibitem[Lee et~al.(2021)Lee, Kim, Shin, Tan, Liu, Meng, Qin, Chen, Yoon, and
  Liu]{PriorGrad}
{Sang-gil} Lee, Heeseung Kim, Chaehun Shin, Xu~Tan, Chang Liu, Qi~Meng, Tao
  Qin, Wei Chen, Sungroh Yoon, and Tie-Yan Liu.
\newblock {PriorGrad: Improving Conditional Denoising Diffusion Models with
  Data-Driven Adaptive Prior}, 2021.

\bibitem[Lin et~al.(2021)Lin, Chien, Lin, Lee, and Lee]{FragmentVC}
Yist~Y. Lin, Chung{-}Ming Chien, Jheng{-}Hao Lin, Hung{-}yi Lee, and Lin{-}Shan
  Lee.
\newblock {FragmentVC: Any-To-Any Voice Conversion by End-To-End Extracting and
  Fusing Fine-Grained Voice Fragments with Attention}.
\newblock In \emph{ICASSP 2021 - 2021 IEEE International Conference on
  Acoustics, Speech and Signal Processing (ICASSP)}, pp.\  5939--5943, 2021.

\bibitem[Liptser \& Shiryaev(1978)Liptser and Shiryaev]{SDE-book}
Robert~S. Liptser and Albert~N. Shiryaev.
\newblock \emph{{Statistics of Random Processes}}, volume~5 of \emph{Stochastic
  Modelling and Applied Probability}.
\newblock Springer-Verlag, 1978.

\bibitem[Liu et~al.(2021{\natexlab{a}})Liu, Cao, Su, and Meng]{DiffSVC}
Songxiang Liu, Yuewen Cao, Dan Su, and Helen Meng.
\newblock {DiffSVC: A Diffusion Probabilistic Model for Singing Voice
  Conversion}, 2021{\natexlab{a}}.

\bibitem[Liu et~al.(2021{\natexlab{b}})Liu, Cao, Wang, Wu, Liu, and
  Meng]{PPG-VC}
Songxiang Liu, Yuewen Cao, Disong Wang, Xixin Wu, Xunying Liu, and Helen Meng.
\newblock {Any-to-Many Voice Conversion With Location-Relative
  Sequence-to-Sequence Modeling}.
\newblock \emph{IEEE/ACM Transactions on Audio, Speech, and Language
  Processing}, 29:\penalty0 1717--1728, 2021{\natexlab{b}}.

\bibitem[Luong \& Tran(2021)Luong and Tran]{DisentVAE}
Manh Luong and Viet~Anh Tran.
\newblock {Many-to-Many Voice Conversion Based Feature Disentanglement Using
  Variational Autoencoder}.
\newblock In \emph{Proc. Interspeech 2021}, pp.\  851--855, 2021.

\bibitem[McAuliffe et~al.(2017)McAuliffe, Socolof, Mihuc, Wagner, and
  Sonderegger]{MFA}
Michael McAuliffe, Michaela Socolof, Sarah Mihuc, Michael Wagner, and Morgan
  Sonderegger.
\newblock {Montreal Forced Aligner: Trainable Text-Speech Alignment Using
  Kaldi}.
\newblock In \emph{Proc. Interspeech 2017}, pp.\  498--502, 2017.

\bibitem[Nercessian(2020)]{AutoVC-PPG}
Shahan Nercessian.
\newblock {Improved Zero-Shot Voice Conversion Using Explicit Conditioning
  Signals}.
\newblock In \emph{Interspeech 2020, 21st Annual Conference of the
  International Speech Communication Association, Virtual Event, Shanghai,
  China, 25-29 October 2020}, pp.\  4711--4715. {ISCA}, 2020.

\bibitem[Nichol \& Dhariwal(2021)Nichol and Dhariwal]{IDDPM}
Alexander~Quinn Nichol and Prafulla Dhariwal.
\newblock {Improved Denoising Diffusion Probabilistic Models}.
\newblock In \emph{Proceedings of the 38th International Conference on Machine
  Learning}, volume 139 of \emph{Proceedings of Machine Learning Research},
  pp.\  8162--8171. PMLR, 18--24 Jul 2021.

\bibitem[Panayotov et~al.(2015)Panayotov, Chen, Povey, and
  Khudanpur]{LibriSpeech}
Vassil Panayotov, Guoguo Chen, Daniel Povey, and Sanjeev Khudanpur.
\newblock {Librispeech: An ASR Corpus Based on Public Domain Audio Books}.
\newblock In \emph{2015 IEEE International Conference on Acoustics, Speech and
  Signal Processing (ICASSP)}, pp.\  5206--5210, 2015.

\bibitem[Popov et~al.(2021)Popov, Vovk, Gogoryan, Sadekova, and
  Kudinov]{Grad-TTS}
Vadim Popov, Ivan Vovk, Vladimir Gogoryan, Tasnima Sadekova, and Mikhail
  Kudinov.
\newblock {Grad-TTS: A Diffusion Probabilistic Model for Text-to-Speech}.
\newblock In \emph{Proceedings of the 38th International Conference on Machine
  Learning, {ICML} 2021, 18-24 July 2021, Virtual Event}, volume 139 of
  \emph{Proceedings of Machine Learning Research}, pp.\  8599--8608. {PMLR},
  2021.

\bibitem[Qian et~al.(2019)Qian, Zhang, Chang, Yang, and
  Hasegawa-Johnson]{AutoVC}
Kaizhi Qian, Yang Zhang, Shiyu Chang, Xuesong Yang, and Mark Hasegawa-Johnson.
\newblock {AutoVC: Zero-Shot Voice Style Transfer with Only Autoencoder Loss}.
\newblock In \emph{Proceedings of the 36th International Conference on Machine
  Learning}, volume~97 of \emph{Proceedings of Machine Learning Research}, pp.\
   5210--5219. PMLR, 09--15 Jun 2019.

\bibitem[Qian et~al.(2020)Qian, Jin, Hasegawa-Johnson, and Mysore]{AutoVC-F0}
Kaizhi Qian, Zeyu Jin, Mark Hasegawa-Johnson, and Gautham~J. Mysore.
\newblock {F0-Consistent Many-To-Many Non-Parallel Voice Conversion Via
  Conditional Autoencoder}.
\newblock In \emph{ICASSP 2020 - 2020 IEEE International Conference on
  Acoustics, Speech and Signal Processing (ICASSP)}, pp.\  6284--6288, 2020.

\bibitem[Ronneberger et~al.(2015)Ronneberger, Fischer, and Brox]{UNet}
Olaf Ronneberger, Philipp Fischer, and Thomas Brox.
\newblock {U-Net: Convolutional Networks for Biomedical Image Segmentation}.
\newblock In \emph{Medical Image Computing and Computer-Assisted Intervention
  -- MICCAI 2015}, pp.\  234--241. Springer International Publishing, 2015.

\bibitem[Saito et~al.(2018)Saito, Ijima, Nishida, and Takamichi]{VAE-PPG}
Yuki Saito, Yusuke Ijima, Kyosuke Nishida, and Shinnosuke Takamichi.
\newblock {Non-Parallel Voice Conversion Using Variational Autoencoders
  Conditioned by Phonetic Posteriorgrams and D-Vectors}.
\newblock In \emph{2018 IEEE International Conference on Acoustics, Speech and
  Signal Processing (ICASSP)}, pp.\  5274--5278, 2018.

\bibitem[San-Roman et~al.(2021)San-Roman, Nachmani, and Wolf]{NoiseEstimation}
Robin San-Roman, Eliya Nachmani, and Lior Wolf.
\newblock {Noise Estimation for Generative Diffusion Models}, 2021.

\bibitem[Song et~al.(2021{\natexlab{a}})Song, Meng, and Ermon]{DDIM}
Jiaming Song, Chenlin Meng, and Stefano Ermon.
\newblock {Denoising Diffusion Implicit Models}.
\newblock In \emph{International Conference on Learning Representations},
  2021{\natexlab{a}}.

\bibitem[Song et~al.(2021{\natexlab{b}})Song, Durkan, Murray, and
  Ermon]{DPM-ML-training}
Yang Song, Conor Durkan, Iain Murray, and Stefano Ermon.
\newblock {Maximum Likelihood Training of Score-Based Diffusion Models},
  2021{\natexlab{b}}.

\bibitem[Song et~al.(2021{\natexlab{c}})Song, Sohl-Dickstein, Kingma, Kumar,
  Ermon, and Poole]{Song-main}
Yang Song, Jascha Sohl-Dickstein, Diederik~P Kingma, Abhishek Kumar, Stefano
  Ermon, and Ben Poole.
\newblock {Score-Based Generative Modeling through Stochastic Differential
  Equations}.
\newblock In \emph{International Conference on Learning Representations},
  2021{\natexlab{c}}.

\bibitem[Vincent(2011)]{Score-Matching2}
Pascal Vincent.
\newblock {A Connection Between Score Matching and Denoising Autoencoders}.
\newblock \emph{Neural Computation}, 23\penalty0 (7):\penalty0 1661--1674,
  2011.

\bibitem[Wang et~al.(2021)Wang, Deng, Yeung, Chen, Liu, and Meng]{VQMIVC}
Disong Wang, Liqun Deng, Yu~Ting Yeung, Xiao Chen, Xunying Liu, and Helen Meng.
\newblock {VQMIVC: Vector Quantization and Mutual Information-Based
  Unsupervised Speech Representation Disentanglement for One-Shot Voice
  Conversion}.
\newblock In \emph{Proc. Interspeech 2021}, pp.\  1344--1348, 2021.

\bibitem[Watson et~al.(2021)Watson, Ho, Norouzi, and Chan]{TrainableSampling}
Daniel Watson, Jonathan Ho, Mohammad Norouzi, and William Chan.
\newblock {Learning to Efficiently Sample from Diffusion Probabilistic Models},
  2021.

\bibitem[Wu et~al.(2020)Wu, Chen, and Lee]{VQVCp}
Da{-}Yi Wu, Yen{-}Hao Chen, and Hung{-}yi Lee.
\newblock {VQVC+: One-Shot Voice Conversion by Vector Quantization and U-Net
  Architecture}.
\newblock In Helen Meng, Bo~Xu, and Thomas~Fang Zheng (eds.), \emph{Interspeech
  2020, 21st Annual Conference of the International Speech Communication
  Association, Virtual Event, Shanghai, China, 25-29 October 2020}, pp.\
  4691--4695. {ISCA}, 2020.

\bibitem[Yamagishi et~al.(2019)Yamagishi, Veaux, and MacDonald]{VCTK}
Junichi Yamagishi, Christophe Veaux, and Kirsten MacDonald.
\newblock {CSTR VCTK Corpus: English Multi-speaker Corpus for CSTR Voice
  Cloning Toolkit (version 0.92)}, 2019.

\bibitem[Zen et~al.(2019)Zen, Clark, Weiss, Dang, Jia, Wu, Zhang, and
  Chen]{LibriTTS}
Heiga Zen, Rob Clark, Ron~J. Weiss, Viet Dang, Ye~Jia, Yonghui Wu, Yu~Zhang,
  and Zhifeng Chen.
\newblock {LibriTTS: A Corpus Derived from LibriSpeech for Text-to-Speech}.
\newblock In \emph{Interspeech}, 2019.

\end{thebibliography}
\bibliographystyle{iclr2022}

\newpage

\appendix
\section{Forward VP SDE solution}
\label{app:sde_solution}
Since function $\gamma_{0,t}^{-1}X_{t}$ is linear in $X_{t}$, taking its differential does not require second order derivative term in It\^o's formula:

\begin{equation}
\begin{split}
    d(\gamma_{0,t}^{-1}X_{t})&=d\left(e^{\frac{1}{2}\int_{0}^{t}{\beta_{u}du}}X_{t}\right)\\&=e^{\frac{1}{2}\int_{0}^{t}{\beta_{u}du}}\cdot \frac{1}{2}\beta_{t}X_{t}dt + e^{\frac{1}{2}\int_{0}^{t}{\beta_{u}du}}\cdot \left(-\frac{1}{2}\beta_{t}X_{t}dt + \sqrt{\beta_{t}}d\overrightarrow{W_{t}}\right) \\&= \sqrt{\beta_{t}}e^{\frac{1}{2}\int_{0}^{t}{\beta_{u}du}}d\overrightarrow{W_{t}}.
\end{split}
\end{equation}

Integrating this expression from $s$ to $t$ results in an It\^o's integral:

\begin{equation}
    e^{\frac{1}{2}\int_{0}^{t}{\beta_{u}du}}X_{t} - e^{\frac{1}{2}\int_{0}^{s}{\beta_{u}du}}X_{s} = \int_{s}^{t}{\sqrt{\beta_{\tau}}e^{\frac{1}{2}\int_{0}^{\tau}{\beta_{u}du}}d\overrightarrow{W_{\tau}}},
\end{equation}

or

\begin{equation}
    X_{t} = e^{-\frac{1}{2}\int_{s}^{t}{\beta_{u}du}}X_{s} + \int_{s}^{t}{\sqrt{\beta_{\tau}}e^{-\frac{1}{2}\int_{\tau}^{t}{\beta_{u}du}}d\overrightarrow{W_{\tau}}}.
\end{equation}

The integrand on the right-hand side is deterministic and belongs to $L_{2}[0,1]$ (for practical noise schedule choices), so its It\^o's integral is a normal random variable, a martingale (meaning it has zero mean) and satisfies It\^o's isometry which allows to calculate its variance:

\begin{equation}
    \Var(X_{t}|X_{s}) = \int_{s}^{t}{\beta_{\tau}}e^{-\int_{\tau}^{t}{\beta_{u}du}}\ide d\tau = \left(1 - e^{-\int_{s}^{t}{\beta_{u}du}}\right)\ide .
\end{equation}

Thus

\begin{equation}
    \Law(X_{t}|X_{s}) = \mathcal{N}\left(e^{-\frac{1}{2}\int_{s}^{t}{\beta_{u}du}}X_{s}, \left(1 - e^{-\int_{s}^{t}{\beta_{u}du}}\right)\ide\right) = \mathcal{N}(\gamma_{s,t}X_{s}, (1-\gamma_{s,t}^{2})\ide)
\end{equation}

\section{The optimal coefficients asymptotics}
\label{app:asymptotics}

First derive asymptotics for $\gamma$:
\begin{equation}
    \gamma_{t-h, t} = e^{-\frac{1}{2}\int_{t-h}^{t}{\beta_{u}du}} = 1 - \frac{1}{2}\beta_{t}h + \bar{o}(h),
\end{equation}
\begin{equation}
    \gamma_{0, t-h}^{2}=e^{-\int_{0}^{t-h}{\beta_{u}du}}=e^{-\int_{0}^{t}{\beta_{u}du}}e^{\int_{t-h}^{t}{\beta_{u}du}}=\gamma_{0,t}^{2}(1 + \beta_{t}h) + \bar{o}(h),
\end{equation}
\begin{equation}
    \gamma_{0, t-h}=e^{-\frac{1}{2}\int_{0}^{t-h}{\beta_{u}du}}=e^{-\frac{1}{2}\int_{0}^{t}{\beta_{u}du}}e^{\frac{1}{2}\int_{t-h}^{t}{\beta_{u}du}}=\gamma_{0,t}(1 + \frac{1}{2}\beta_{t}h) + \bar{o}(h),
\end{equation}
\begin{equation}
    \gamma_{t-h, t}^{2} = e^{-\int_{t-h}^{t}{\beta_{u}du}} = 1 - \beta_{t}h + \bar{o}(h).
\end{equation}

Then find asymptotics for $\mu$, $\nu$ and $\sigma^{2}$:
\begin{equation}
    \mu_{t-h,t}=\left(1 - \frac{1}{2}\beta_{t}h + \bar{o}(h)\right)\frac{1-\gamma_{0,t}^{2}-\gamma_{0,t}^{2}\beta_{t}h + \bar{o}(h)}{1-\gamma_{0,t}^{2}} = 1 - \frac{1}{2}\beta_{t}h-\frac{\gamma_{0,t}^{2}}{1-\gamma_{0,t}^{2}}\beta_{t}h + \bar{o}(h),
\end{equation}
\begin{equation}
    \nu_{t-h,t} = (\gamma_{0,t}(1 + \frac{1}{2}\beta_{t}h) + \bar{o}(h))\frac{\beta_{t}h + \bar{o}(h)}{1 - \gamma_{0,t}^{2}} = \frac{\gamma_{0,t}}{1 - \gamma_{0,t}^{2}}\beta_{t}h + \bar{o}(h),
\end{equation}
\begin{equation}
    \sigma^{2}_{t-h,t} = \frac{1}{1 - \gamma_{0,t}^{2}}(\beta_{t}h + \bar{o}(h))(1 - \gamma_{0,t}^{2}(1 + \beta_{t}h) + \bar{o}(h)) = \beta_{t}h + \bar{o}(h).
\end{equation}

Finally we get asymptotics for $\kappa^{*}$, $\omega^{*}$ and $\sigma^{*}$:

\begin{equation}
\begin{split}
    \kappa_{t,h}^{*}&=\frac{\nu_{t-h,t}(1 - \gamma_{0,t}^{2})}{\gamma_{0,t}\beta_{t}h} - 1 = \frac{\gamma_{0,t-h}(1 - \gamma_{t-h,t}^{2})}{\gamma_{0,t}\beta_{t}h} - 1 \\& =\frac{(\beta_{t}h + \bar{o}(h))((1 + \frac{1}{2}\beta_{t}h)\gamma_{0,t} + \bar{o}(h))}{\gamma_{0,t}\beta_{t}h} - 1 = \bar{o}(1),
\end{split}
\end{equation}

\begin{equation}
\begin{split}
    &\beta_{t} h \omega_{t,h}^{*} = \mu_{t-h,t} - 1 +  \frac{\nu_{t-h,t}}{\gamma_{0,t}} - \frac{1}{2}\beta_{t}h = 1 - \frac{1}{2}\beta_{t}h - \frac{\gamma_{0,t}^{2}}{1 - \gamma_{0,t}^{2}}\beta_{t}h - 1 - \frac{1}{2}\beta_{t}h + \frac{1}{\gamma_{0,t}}\times \\& \times\left(\frac{\gamma_{0,t}}{1 - \gamma_{0,t}^{2}}\beta_{t}h + \bar{o}(h)\right) + \bar{o}(h) = \beta_{t}h\left(-1 - \frac{\gamma_{0,t}^{2}}{1 - \gamma_{0,t}^{2}} + \frac{1}{1 - \gamma_{0,t}^{2}}\right) + \bar{o}(h) = \bar{o}(h),
\end{split}
\end{equation}

\begin{equation}
\begin{split}
    &(\sigma^{*}_{t,h})^{2} = \sigma^{2}_{t-h,t} + \nu_{t-h,t}^{2}\mathbb{E}_{X_{t}}\left[\Tr \left(\Var{(X_{0}|X_{t})}\right)\right]/n = \beta_{t}h + \bar{o}(h) \\
    &+ \frac{\gamma_{0,t}^{2}}{(1 - \gamma_{0,t}^{2})^{2}}\beta_{t}^{2}h^{2}\mathbb{E}_{X_{t}}\left[\Tr \left(\Var{(X_{0}|X_{t})}\right)\right]/n = \beta_{t}h(1 + \bar{o}(1)).
\end{split}
\end{equation}

\section{Proof of the Theorem \ref{th:main}}
\label{app:th_proof}

The key fact necessary to prove the Theorem \ref{th:main} is established in the following

\begin{lemma}
\label{lm:net_optimal}
    Let $p_{0|t}(\cdot|x)$ be pdf of conditional distribution $\Law{(X_{0}|X_{t}=x)}$. Then for any $t\in [0,1]$ and $x\in \mathbb{R}^{n}$
    \begin{equation}
    \label{eq:net_optimal}
        s_{\theta^{*}}(x,t) = -\frac{1}{1-\gamma_{0,t}^{2}}\left(x - \gamma_{0,t}\mathbb{E}_{p_{0|t}(\cdot|x)}X_{0}\right).
    \end{equation}
\end{lemma}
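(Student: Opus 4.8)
The plan is to identify the optimal network with the marginal score and then apply what is essentially Tweedie's formula. First I would observe that the objective (\ref{eq:objective}) is, for each fixed $t$, a pointwise quadratic in the value $s_\theta(x,t)$ whose unique minimizer over a sufficiently expressive network class is the marginal score itself; hence training till optimality gives $s_{\theta^*}(x,t)=\nabla\log p_t(x)$ for (almost) every $x$ and $t$. The task thus reduces to computing $\nabla\log p_t(x)$ explicitly in terms of the posterior $\Law(X_0|X_t=x)$.

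Next I would write the marginal density as the mixture $p_t(x)=\int p_{t|0}(x|x_0)p_0(x_0)\,dx_0$ and use the explicit Gaussian form of $p_{t|0}$ coming from (\ref{eq:xt_distribution}), namely the density of $\mathcal{N}(\gamma_{0,t}x_0,(1-\gamma_{0,t}^2)\ide)$. Differentiating this Gaussian in $x$ gives the clean identity $\nabla_x p_{t|0}(x|x_0)=-\frac{x-\gamma_{0,t}x_0}{1-\gamma_{0,t}^2}\,p_{t|0}(x|x_0)$, which is the one computation that drives the whole argument.

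Then I would differentiate under the integral sign, form $\nabla\log p_t(x)=\nabla p_t(x)/p_t(x)$, and pull the common factor $-1/(1-\gamma_{0,t}^2)$ outside the integral. Recognizing by Bayes' rule that $p_{t|0}(x|x_0)p_0(x_0)/p_t(x)=p_{0|t}(x_0|x)$ turns the remaining integral into $\int (x-\gamma_{0,t}x_0)\,p_{0|t}(x_0|x)\,dx_0 = x-\gamma_{0,t}\,\mathbb{E}_{p_{0|t}(\cdot|x)}X_0$, which is exactly (\ref{eq:net_optimal}).

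The only genuinely delicate point is justifying the interchange of $\nabla_x$ and the integral over $x_0$, and I expect this to be the main (though routine) obstacle. It follows from dominated convergence once one notes that for the admissible noise schedules $\gamma_{0,t}<1$ for $t>0$, so $p_{t|0}(\cdot|x_0)$ is a non-degenerate Gaussian whose $x$-derivative is dominated by an integrable envelope uniformly in a neighborhood of any fixed $x$; this is also implicitly where a mild regularity assumption on $p_0$ (a finite first moment, ensuring $\mathbb{E}_{p_{0|t}(\cdot|x)}X_0$ is well defined) enters.
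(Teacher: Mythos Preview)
Your argument is correct but follows a somewhat different path from the paper. You read off $s_{\theta^*}(x,t)=\nabla\log p_t(x)$ directly from the explicit score matching objective (\ref{eq:objective}) and then derive the posterior-mean form by Tweedie's formula, i.e.\ by differentiating the mixture $p_t(x)=\int p_{t|0}(x|x_0)p_0(x_0)\,dx_0$ under the integral and recognizing $p_{t|0}p_0/p_t=p_{0|t}$. The paper instead passes to the equivalent \emph{denoising} objective with target $\nabla\log p_{t|0}(X_t|X_0)$, swaps the order of the $X_0,X_t$ expectations via Bayes, and uses the bias--variance identity $\mathbb{E}\|\xi-a\|^2=\mathbb{E}\|\xi-\mathbb{E}\xi\|^2+\|\mathbb{E}\xi-a\|^2$ to conclude $s_{\theta^*}(x,t)=\mathbb{E}_{p_{0|t}(\cdot|x)}[\nabla\log p_{t|0}(x|X_0)]$ before plugging in the Gaussian conditional. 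Your route is arguably more direct given that (\ref{eq:objective}) already displays $\nabla\log p_t$ as the target, at the price of the dominated-convergence step you flag; the paper's route sidesteps that interchange but relies on the score-matching equivalence (which hides comparable regularity assumptions). Either way one lands on (\ref{eq:net_optimal}).
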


\begin{proof}[Proof of the Lemma \ref{lm:net_optimal}]
As mentioned in \citep{Song-main}, an expression alternative to (\ref{eq:objective}) can be derived for $\theta^{*}$ under mild assumptions on the data density \citep{Score-Matching1, Score-Matching2}:

\begin{equation}
    \theta^{*}=\argmin_{\theta}{\int_{0}^{1}{\lambda_{t}\mathbb{E}_{X_{0}\sim p_{0}(\cdot)}\mathbb{E}_{X_{t}\sim p_{t|0}(\cdot|X_{0})}\Vert s_{\theta}(X_{t}, t) - \nabla{\log{p_{t|0}(X_{t}|X_{0})}}\Vert_{2}^{2}}}dt,
\end{equation}
where $\Law{(X_{0})}$ is data distribution with pdf $p_{0}(\cdot)$ and $\Law{(X_{t}|X_{0}=x_{0})}$ has pdf $p_{t|0}(\cdot | x_{0})$. By Bayes formula we can rewrite this in terms of pdfs $p_{t}(\cdot)$ and $p_{0|t}(\cdot | x_{t})$ of distributions $\Law{(X_{t})}$ and $\Law{(X_{0}|X_{t}=x_{t})}$ correspondingly:

\begin{equation}
    \theta^{*}=\argmin_{\theta}{\int_{0}^{1}{\lambda_{t}\mathbb{E}_{X_{t}\sim p_{t}(\cdot)}\mathbb{E}_{X_{0}\sim p_{0|t}(\cdot|X_{t})}\Vert s_{\theta}(X_{t}, t) - \nabla{\log{p_{t|0}(X_{t}|X_{0})}}\Vert_{2}^{2}}}dt.
\end{equation}

For any $n$-dimensional random variable $\xi$ with finite second moment and deterministic vector $a$ we have 

\begin{equation}
\begin{split}
    \mathbb{E}\Vert\xi - a\Vert_{2}^{2} &= \mathbb{E}\Vert\xi - \mathbb{E}\xi + \mathbb{E}\xi - a\Vert_{2}^{2} = \mathbb{E}\Vert\xi - \mathbb{E}\xi\Vert_{2}^{2} + 2\langle\mathbb{E}[\xi - \mathbb{E}\xi],\mathbb{E}\xi - a\rangle \\&+ \mathbb{E}\Vert\mathbb{E}\xi - a\Vert_{2}^{2} = \mathbb{E}\Vert\xi - \mathbb{E}\xi\Vert_{2}^{2} + \Vert\mathbb{E}\xi - a\Vert_{2}^{2}.
\end{split}
\end{equation}

In our case $\xi=\nabla{\log{p_{t|0}(X_{t}|X_{0})}}$ and $a=s_{\theta}(X_{t}, t)$, so $\mathbb{E}\Vert\xi - \mathbb{E}\xi\Vert_{2}^{2}$ is independent of $\theta$. Thus

\begin{equation}
    \theta^{*}=\argmin_{\theta}{\int_{0}^{1}{\lambda_{t}\mathbb{E}_{X_{t}\sim p_{t}(\cdot)}\Vert s_{\theta}(X_{t}, t) - \mathbb{E}_{X_{0}\sim p_{0|t}(\cdot|X_{t})} \left[\nabla{\log{p_{t|0}(X_{t}|X_{0})}}\right]\Vert_{2}^{2}}}dt.
\end{equation}

Therefore, the optimal score estimation network $s_{\theta^{*}}$ can be expressed as

\begin{equation}
    s_{\theta^{*}}(x, t) = \mathbb{E}_{p_{0|t}(\cdot|x)} \left[\nabla{\log{p_{t|0}(x|X_{0})}}\right]
\end{equation}

for all $t\in [0,1]$ and $x\in \supp{\{p_{t}\}}=\mathbb{R}^{n}$.

As proven in Appendix \ref{app:sde_solution}, $\Law{(X_{t}|X_{0})}$ is Gaussian with mean vector $\gamma_{0,t}X_{0}$ and covariance matrix $(1-\gamma_{0,t}^{2})\ide$, so finally we obtain

\begin{equation}
    s_{\theta^{*}}(x,t) =
    \mathbb{E}_{p_{0|t}(\cdot|x)}\left[-\frac{1}{1-\gamma_{0,t}^{2}}\left(x - \gamma_{0,t}X_{0}\right)\right]= -\frac{1}{1-\gamma_{0,t}^{2}}\left(x - \gamma_{0,t}\mathbb{E}_{p_{0|t}(\cdot|x)}X_{0}\right).
\end{equation}

\end{proof}

Now let us prove the Theorem \ref{th:main}.

\begin{proof}[Proof of the Theorem \ref{th:main}]
    The sampling scheme (\ref{eq:sde_solver}) consists in adding Gaussian noise to a linear combination of $\hat{X}_{t}$ and $s_{\theta^{*}}(\hat{X}_{t}, t)$. Combining (\ref{eq:sde_solver}) and the Lemma \ref{lm:net_optimal} we get

    \begin{equation}
    \begin{split}
        &\hat{X}_{t-h} = \hat{\sigma}_{t,h}\xi_{t} + \hat{X}_{t} + \beta_{t}h\left(\left(\frac{1}{2}+\hat{\omega}_{t,h}\right)\hat{X}_{t} + (1+\hat{\kappa}_{t,h})s_{\theta^{*}}(\hat{X}_{t},t)\right) = \hat{\sigma}_{t,h}\xi_{t} \\ &+ \left(1 + \beta_{t}h\left(\frac{1}{2} + \hat{\omega}_{t,h}\right)\right)\hat{X}_{t} + \beta_{t}h(1+\hat{\kappa}_{t,h})\left(-\frac{1}{1-\gamma_{0,t}^{2}}\left(\hat{X}_{t} - \gamma_{0,t}\mathbb{E}_{p_{0|t}(\cdot|\hat{X}_{t})}X_{0}\right)\right) \\& =\hat{\sigma}_{t,h}\xi_{t} + \left(1 + \beta_{t}h\left(\frac{1}{2} + \hat{\omega}_{t,h}-\frac{1+\hat{\kappa}_{t,h}}{1-\gamma_{0,t}^{2}}\right)\right)\hat{X}_{t} + \frac{\gamma_{0,t}\beta_{t}h(1+\hat{\kappa}_{t,h})}{1-\gamma_{0,t}^{2}}\mathbb{E}_{p_{0|t}(\cdot|\hat{X}_{t})}X_{0},
    \end{split}
    \end{equation}
    where $\xi_{t}$ are i.i.d. random variables from standard normal distribution $\mathcal{N}(0,\ide)$ for $t=1,1-h,..,h$. Thus, the distribution $\hat{X}_{t-h}|\hat{X}_{t}$ is also Gaussian:

    \begin{equation}
    \label{eq:law_net}
        \Law{(\hat{X}_{t-h} | \hat{X}_{t})} = \mathcal{N}\left(\hat{\mu}_{t,h}(\hat{\kappa}_{t,h},\hat{\omega}_{t,h})\hat{X}_{t} + \hat{\nu}_{t,h}(\hat{\kappa}_{t,h})\mathbb{E}_{p_{0|t}(\cdot|\hat{X}_{t})}X_{0},\hat{\sigma}_{t,h}^{2}\ide\right),
    \end{equation}

    \begin{equation}
        \hat{\mu}_{t,h}(\hat{\kappa}_{t,h},\hat{\omega}_{t,h}) = 1 + \beta_{t}h\left(\frac{1}{2} + \hat{\omega}_{t,h}-\frac{1+\hat{\kappa}_{t,h}}{1-\gamma_{0,t}^{2}}\right),
    \end{equation}

    \begin{equation}
        \hat{\nu}_{t,h}(\hat{\kappa}_{t,h}) = \frac{\gamma_{0,t}\beta_{t}h(1+\hat{\kappa}_{t,h})}{1-\gamma_{0,t}^{2}},
    \end{equation}

    which leads to the following formula for the transition densities of the reverse diffusion: 

    \begin{equation}
    \label{eq:p_net}
        \hat{p}_{t-h|t}(x_{t-h}|x_{t})=\frac{1}{\sqrt{2\pi}\hat{\sigma}_{t,h}^{n}}\exp\left(-\frac{\Vert x_{t-h} - \hat{\mu}_{t,h}x_{t} - \hat{\nu}_{t,h}\mathbb{E}_{p_{0|t}(\cdot|x_{t})}X_{0}\Vert_{2}^{2}}{2\hat{\sigma}^{2}_{t,h}}\right).
    \end{equation}
    
    Moreover, comparing $\hat{\mu}_{t,h}$ and $\hat{\nu}_{t,h}$ with $\mu_{t-h,t}$ and $\nu_{t-h,t}$ defined in (\ref{eq:notation_gamma}) we deduce that

    \begin{equation}
        \hat{\nu}_{t,h} = \nu_{t-h,t} \Leftrightarrow   \frac{\gamma_{0,t}\beta_{t}h(1+\hat{\kappa}_{t,h})}{1-\gamma_{0,t}^{2}} = \nu_{t-h,t} \Leftrightarrow \hat{\kappa}_{t,h}=\kappa^{*}_{t,h}.
    \end{equation}

    If we also want $\hat{\mu}_{t,h}=\mu_{t-h,t}$ to be satisfied, then we should have

    \begin{equation}
        1 + \beta_{t}h\left(\frac{1}{2} + \hat{\omega}_{t,h}-\frac{1+\kappa^{*}_{t,h}}{1-\gamma_{0,t}^{2}}\right) = \mu_{t-h,t} \Leftrightarrow \left(\frac{\mu_{t-h,t} - 1}{\beta_{t}h}-\omega^{*}_{t,h}+\hat{\omega}_{t,h}\right)\beta_{t}h + 1 = \mu_{t-h,t},
    \end{equation}

    i.e. $\hat{\nu}_{t,h}=\nu_{t-h,t}$ and $\hat{\mu}_{t,h}=\mu_{t-h,t}$ iff $\hat{\kappa}_{t,h}=\kappa^{*}_{t,h}$ and $\hat{\omega}_{t,h}=\omega^{*}_{t,h}$ for the parameters $\kappa^{*}_{t,h}$ and $\omega^{*}_{t,h}$ defined in (\ref{eq:notation_correction}).

    As for the corresponding densities of the forward process $X$, they are Gaussian when conditioned on the initial data point $X_{0}$:

    \begin{equation}
    \label{eq:law_true}
        \Law{(X_{t-h} | X_{t}, X_{0})} = \mathcal{N}(\mu_{t-h,t}X_{t} + \nu_{t-h,t}X_{0},\sigma_{t-h,t}^{2}\ide),
    \end{equation}

    where coefficients $\mu_{t-h,t}$, $\nu_{t-h,t}$ and $\sigma_{t-h,t}$ are defined in (\ref{eq:notation_gamma}). This formula for $\Law{(X_{t-h} | X_{t}, X_{0})}$ follows from the general fact about Gaussian distributions appearing in many recent works on diffusion probabilistic modeling \citep{VDPM}: if $Z_{t}|Z_{s}\sim \mathcal{N}(\alpha_{t|s}Z_{s}, \sigma_{t|s}^{2}\ide)$ and $Z_{t}|Z_{0}\sim \mathcal{N}(\alpha_{t|0}Z_{0}, \sigma_{t|0}^{2}\ide)$ for $0 < s < t$, then
    \begin{equation}
    \label{eq:gauss}
        \Law{(Z_{s}|Z_{t}, Z_{0})} = \mathcal{N}\left(\frac{\sigma_{s|0}^{2}}{\sigma_{t|0}^{2}}\alpha_{t|s}Z_{t} + \frac{\sigma_{t|s}^{2}}{\sigma_{t|0}^{2}}\alpha_{s|0}Z_{0}, \frac{\sigma_{s|0}^{2}\sigma_{t|s}^{2}}{\sigma_{t|0}^{2}}\ide\right).
    \end{equation}

    This fact is a result of applying Bayes formula to normal distributions. In our case $\alpha_{t|s}=\gamma_{s,t}$ and $\sigma_{t|s}^{2} = 1 - \gamma_{s,t}^{2}$.

    To get an expression for the densities $p_{t-h|t}(x_{t-h}|x_{t})$ similar to (\ref{eq:p_net}), we need to integrate out the dependency on data $X_{0}$ from Gaussian distribution $\Law{(X_{t-h}|X_{t}, X_{0})}$:

    \begin{equation}
    \label{eq:p_exc}
    \begin{split}
        p_{t-h|t}(x_{t-h}|x_{t}) = & \int{p_{t-h,0|t}(x_{t-h},x_{0}|x_{t})dx_{0}}=\int{p_{t-h|t,0}(x_{t-h}|x_{t},x_{0})p_{0|t}(x_{0}|x_{t})dx_{0}} \\&= \mathbb{E}_{X_{0}\sim p_{0|t}(\cdot|x_{t})}[p_{t-h|t,0}(x_{t-h}|x_{t},X_{0})],
    \end{split}
    \end{equation}

    which implies the following formula:

    \begin{equation}
    \label{eq:p_true}
        p_{t-h|t}(x_{t-h}|x_{t})=\frac{1}{\sqrt{2\pi}\sigma_{t-h,t}^{n}}\mathbb{E}_{p_{0|t}(\cdot|x_{t})}\left[\exp\left(-\frac{\Vert x_{t-h} - \mu_{t-h,t}x_{t} - \nu_{t-h,t}X_{0}\Vert_{2}^{2}}{2\sigma^{2}_{t-h,t}}\right)\right].
    \end{equation}

    Note that in contrast with the transition densities (\ref{eq:p_net}) of the reverse process $\hat{X}$, the corresponding densities (\ref{eq:p_true}) of the forward process $X$ are not normal in general.

    Our goal is to find parameters $\hat{\kappa}$, $\hat{\omega}$ and $\hat{\sigma}$ that maximize log-likelihood of sample paths $X$ under probability measure with transition densities $\hat{p}$. Put $t_{k}=kh$ for $k=0,1,..,N$ and write down this log-likelihood:

    \begin{equation}
    \begin{split}
        &\int{p(x_{1},x_{1-h},..,x_{0})\left(\sum_{k=0}^{N-1}{\log{\hat{p}_{t_{k}|t_{k+1}}}(x_{t_{k}}|x_{t_{k+1}})} + \log{\hat{p}_{1}(x_{1})}\right)dx_{1}dx_{1-h}..dx_{0}} \\
        &=\sum_{k=0}^{N-1}\int{p(x_{t_{k}},x_{t_{k+1}})\log{\hat{p}_{t_{k}|t_{k+1}}(x_{t_{k}}|x_{t_{k+1}})}dx_{t_{k+1}}dx_{t_{k}}} + \int{p(x_{1})\log{\hat{p}_{1}}(x_{1})dx_{1}}.
    \end{split}
    \end{equation}

    The last term does not depend on $\hat{\kappa}$, $\hat{\omega}$ and $\hat{\sigma}$, so we can ignore it. Let $R_{k}$ be the $k$-th term in the sum above. Since we are free to have different coefficients $\hat{\kappa}_{t,h}$, $\hat{\omega}_{t,h}$ and $\hat{\sigma}_{t,h}$ for different steps, we can maximize each $R_{k}$ separately. Terms $R_{k}$ can be expressed as

    \begin{equation}
    \begin{split}
        R_{k} &= \int{p(x_{t_{k}},x_{t_{k+1}})\log{\hat{p}_{t_{k}|t_{k+1}}(x_{t_{k}}|x_{t_{k+1}})}dx_{t_{k+1}}dx_{t_{k}}} \\
        &=\int{p(x_{t_{k+1}})p_{t_{k}|t_{k+1}}(x_{t_{k}}|x_{t_{k+1}})\log{\hat{p}_{t_{k}|t_{k+1}}(x_{t_{k}}|x_{t_{k+1}})}dx_{t_{k+1}}dx_{t_{k}}} \\
        &=\mathbb{E}_{X_{t_{k+1}}}\left[\int{p_{t_{k}|t_{k+1}}(x_{t_{k}}|X_{t_{k+1}})\log{\hat{p}_{t_{k}|t_{k+1}}(x_{t_{k}}|X_{t_{k+1}})}dx_{t_{k}}}\right].
    \end{split}
    \end{equation}

    From now on we will skip subscripts of $\mu$, $\nu$, $\sigma$, $\hat{\mu}$, $\hat{\nu}$, $\hat{\sigma}$, $\hat{\kappa}$, $\hat{\omega}$, $\kappa^{*}$ and $\omega^{*}$ for brevity. Denote

    \begin{equation}
    \label{eq:q}
        Q(x_{t_{k}},X_{t_{k+1}},X_{0})=\frac{1}{\sqrt{2\pi}\sigma^{n}}\exp\left(-\frac{\Vert x_{t_{k}} - \mu X_{t_{k+1}} - \nu X_{0}\Vert_{2}^{2}}{2\sigma^{2}}\right)\log{\hat{p}_{t_{k}|t_{k+1}}(x_{t_{k}}|X_{t_{k+1}})}.
    \end{equation}

    Using the formula (\ref{eq:p_exc}) for the densities of $X$ together with the explicit expression for the Gaussian density $p_{t_{k}|t_{k+1},0}(x_{t_{k}}|X_{t_{k+1}},X_{0})$ and applying Fubini's theorem to change the order of integration, we rewrite $R_{k}$ as 

    \begin{equation}
    \begin{split}
        R_{k}&=\mathbb{E}_{X_{t_{k+1}}}\left[\int{p_{t_{k}|t_{k+1}}(x_{t_{k}}|X_{t_{k+1}})}\log{\hat{p}_{t_{k}|t_{k+1}}(x_{t_{k}}|X_{t_{k+1}})}dx_{t_{k}}\right]\\
        &=\mathbb{E}_{X_{t_{k+1}}}\left[\int{\mathbb{E}_{X_{0}\sim p_{0|t_{k+1}}(\cdot|X_{t_{k+1}})}\left[p_{t_{k}|t_{k+1},0}(x_{t_{k}}|X_{t_{k+1}},X_{0})\log{\hat{p}_{t_{k}|t_{k+1}}(x_{t_{k}}|X_{t_{k+1}})}\right]dx_{t_{k}}}\right] \\
        &=\mathbb{E}_{X_{t_{k+1}}}\left[\int{\mathbb{E}_{X_{0}\sim p_{0|t_{k+1}}(\cdot|X_{t_{k+1}})}[Q(x_{t_{k}}, X_{t_{k+1}}, X_{0})]dx_{t_{k}}}\right]\\
        &=\mathbb{E}_{X_{t_{k+1}}}\mathbb{E}_{X_{0}\sim p_{0|t_{k+1}}(\cdot|X_{t_{k+1}})}\left[\int{Q(x_{t_{k}}, X_{t_{k+1}}, X_{0})dx_{t_{k}}}\right].
    \end{split}
    \end{equation}
    
    The formula (\ref{eq:q}) implies that the integral of $Q(x_{t_{k}}, X_{t_{k+1}},X_{0})$ with respect to $x_{t_{k}}$ can be seen as expectation of $\log{\hat{p}_{t_{k}|t_{k+1}}(\xi|X_{t_{k+1}})}$ with respect to normal random variable $\xi$ with mean $\mu X_{t_{k+1}} + \nu X_{0}$ and covariance matrix $\sigma^{2}\ide$. Plugging in the expression (\ref{eq:p_net}) into (\ref{eq:q}), we can calculate this integral:

    \begin{equation}
    \begin{split}
        &\mathbb{E}_{\xi}\left[-\log{\sqrt{2\pi}} -n\log{\hat{\sigma}} - \frac{\Vert\xi - \hat{\mu}X_{t_{k+1}}-\hat{\nu}\mathbb{E}_{X_{0}'\sim p_{0|t_{k+1}}(\cdot|X_{t_{k+1}})}X_{0}'\Vert_{2}^{2}}{2\hat{\sigma}^{2}}\right] \\
        &=-\log{\sqrt{2\pi}} - n\log{\hat{\sigma}} - \frac{\mathbb{E}_{\xi}\Vert \xi - \hat{\mu}X_{t_{k+1}}-\hat{\nu}\mathbb{E}_{p_{0|t_{k+1}}(\cdot|X_{t_{k+1}})}X_{0}'\Vert_{2}^{2}}{2\hat{\sigma}^{2}}.
    \end{split}
    \end{equation}

    Thus, terms $R_{k}$ we want to maximize equal

    \begin{equation}
        R_{k} = -\log{\sqrt{2\pi}} - n\log{\hat{\sigma}} - \mathbb{E}_{X_{t_{k+1}}}\mathbb{E}_{X_{0}\sim p_{0|t_{k+1}}(\cdot|X_{t_{k+1}})}\frac{\mathbb{E}_{\xi}\Vert \xi - \hat{\mu}X_{t_{k+1}}-\hat{\nu}\mathbb{E}_{p_{0|t_{k+1}}(\cdot|X_{t_{k+1}})}X_{0}'\Vert_{2}^{2}}{2\hat{\sigma}^{2}}
    \end{equation}

    Maximizing $R_{k}$ with respect to ($\hat{\kappa}, \hat{\omega}, \hat{\sigma}$) is equivalent to minimizing $\mathbb{E}_{X_{t_{k+1}}}S_{k}$ where $S_{k}$ is given by

    \begin{equation}
    \label{eq:s_term}
        S_{k} = n\log{\hat{\sigma}} + \frac{1}{2\hat{\sigma}^{2}}\mathbb{E}_{X_{0}\sim p_{0|t_{k+1}}(\cdot|X_{t_{k+1}})}\mathbb{E}_{\xi}\Vert \xi - \hat{\mu}X_{t_{k+1}}-\hat{\nu}\mathbb{E}_{p_{0|t_{k+1}}(\cdot|X_{t_{k+1}})}X_{0}'\Vert_{2}^{2},
    \end{equation}

    where the expectation with respect to $\xi\sim \mathcal{N}(\mu X_{t_{k+1}} + \nu X_{0}, \sigma^{2}\ide)$ can be calculated using the fact that for every vector $\hat{a}$ we can express $\mathbb{E}_{\xi}\Vert \xi - \hat{a} \Vert_{2}^{2}$ as

    \begin{equation}
        \mathbb{E}\Vert\xi - \mathbb{E}\xi + \mathbb{E}\xi - \hat{a}\Vert_{2}^{2} = \mathbb{E}\Vert\xi - \mathbb{E}\xi \Vert_{2}^{2} + 2\langle\mathbb{E}[\xi - \mathbb{E}\xi], \mathbb{E}\xi - \hat{a}\rangle + \mathbb{E}\Vert \mathbb{E}\xi - \hat{a}\Vert_{2}^{2} = n\sigma^{2} + \Vert \mathbb{E}\xi - \hat{a}\Vert_{2}^{2}.
    \end{equation}

    So, the outer expectation with respect to $\Law(X_{0}|X_{t_{k+1}})$ in (\ref{eq:s_term}) can be simplified:

    \begin{equation}
    \begin{split}
        &\mathbb{E}_{X_{0}\sim p_{0|t_{k+1}}(\cdot|X_{t_{k+1}})}\left[n\sigma^{2} + \Vert(\mu - \hat{\mu})X_{t_{k+1}} + \nu X_{0} - \hat{\nu}\mathbb{E}_{X_{0}'\sim p_{0|t_{k+1}}(\cdot|X_{t_{k+1}})}X_{0}'\Vert^{2}_{2}\right] \\
        &=n\sigma^{2} + \mathbb{E}_{X_{0}}\Vert((\mu - \hat{\mu})X_{t_{k+1}} + \nu X_{0}-\hat{\nu}\mathbb{E}_{X_{0}'}X_{0}'\Vert_{2}^{2}=n\sigma^{2} + (\mu - \hat{\mu})^{2}\Vert X_{t_{k+1}}\Vert_{2}^{2} \\
        &+2\langle(\mu - \hat{\mu})X_{t_{k+1}},(\nu - \hat{\nu})\mathbb{E}_{X_{0}}X_{0}\rangle + \mathbb{E}_{X_{0}}{\Vert\nu X_{0}} - \hat{\nu}\mathbb{E}_{X_{0}'}X_{0}'\Vert_{2}^{2} = (\mu - \hat{\mu})^{2}\Vert X_{t_{k+1}}\Vert_{2}^{2} \\
        & + 2\langle(\mu - \hat{\mu})X_{t_{k+1}},(\nu - \hat{\nu})\mathbb{E}_{X_{0}}X_{0}\rangle + \nu^{2}\mathbb{E}_{X_{0}}\Vert X_{0}\Vert_{2}^{2} + \hat{\nu}^{2}\Vert\mathbb{E}_{X_{0}}X_{0}\Vert_{2}^{2} + n\sigma^{2} \\
        & -2\nu\hat{\nu}\langle\mathbb{E}_{X_{0}}X_{0},\mathbb{E}_{X'_{0}}X'_{0}\rangle = \Vert(\mu - \hat{\mu})X_{t_{k+1}} + (\nu - \hat{\nu})\mathbb{E}_{X_{0}}X_{0})\Vert_{2}^{2} + \nu^{2}\mathbb{E}_{X_{0}}\Vert X_{0}\Vert_{2}^{2} \\
        & -\nu^{2}\Vert\mathbb{E}_{X_{0}}X_{0}\Vert_{2}^{2} + n\sigma^{2},
    \end{split}
    \end{equation}

    where all the expectations in the formula above are taken with respect to the conditional data distribution $\Law(X_{0}|X_{t_{k+1}})$. So, the resulting expression for the terms $S_{k}$ whose expectation with respect to $\Law{(X_{t_{k+1}})}$ we want to minimize is

    \begin{equation}
    \label{eq:neg-log-lok-term}
    \begin{split}
        S_{k} = n\log&{\hat{\sigma}} + \frac{1}{2\hat{\sigma}^{2}}\Bigl(n\sigma^{2}+\Vert(\mu - \hat{\mu})X_{t_{k+1}} + (\nu - \hat{\nu})\mathbb{E}{[X_{0}|X_{t_{k+1}}]}\Vert_{2}^{2} \\& + \nu^{2}\left(\mathbb{E}\left[\Vert X_{0}\Vert_{2}^{2}|X_{t_{k+1}}\right]-\Vert\mathbb{E}[X_{0}|X_{t_{k+1}}]\Vert_{2}^{2}\right)\Bigr).
    \end{split}
    \end{equation}

    Now it is clear that $\kappa^{*}_{t_{k+1},h}$ and $\omega^{*}_{t_{k+1},h}$ are optimal because $\hat{\mu}_{t_{k+1},h}(\kappa^{*}_{t_{k+1},h}, \omega^{*}_{t_{k+1},h}) = \mu_{t_{k},t_{k+1}}$ and $\hat{\nu}_{t_{k+1},h}(\kappa^{*}_{t_{k+1},h}) = \nu_{t_{k},t_{k+1}}$. For this choice of parameters we have

    \begin{equation}
        \mathbb{E}_{X_{t_{k+1}}}S_{k} = n\log{\hat{\sigma}} + \frac{1}{2\hat{\sigma}^{2}}\left(n\sigma^{2}+\nu^{2}\mathbb{E}_{X_{t_{k+1}}}\left[\mathbb{E}\left[\Vert X_{0}\Vert_{2}^{2}|X_{t_{k+1}}\right]-\Vert\mathbb{E}[X_{0}|X_{t_{k+1}}]\Vert_{2}^{2}\right]\right).
    \end{equation}

    Note that $\mathbb{E}\left[\Vert X_{0}\Vert_{2}^{2}|X_{t_{k+1}}\right]-\Vert\mathbb{E}[X_{0}|X_{t_{k+1}}]\Vert_{2}^{2} = \Tr{(\Var{(X_{0}|X_{t_{k+1}})})}$ is the overall variance of $\Law{(X_{0}|X_{t_{k+1}})}$ along all $n$ dimensions. Differentiating $\mathbb{E}_{X_{t_{k+1}}}{S_{k}}$ with respect to $\hat{\sigma}$ shows that the optimal $\sigma^{*}_{t_{k+1},h}$ should satisfy

    \begin{equation}
        \frac{n}{\sigma^{*}_{t_{k+1},h}} - \frac{1}{(\sigma^{*}_{t_{k+1},h})^{3}}\left(n\sigma_{t_{k},t_{k+1}}^{2}+\nu_{t_{k},t_{k+1}}^{2}\mathbb{E}_{X_{t_{k+1}}}\left[\Tr{(\Var{(X_{0}|X_{t_{k+1}})})}\right]\right)=0,
    \end{equation}

    which is indeed satisfied by the parameters $\sigma^{*}_{t,h}$ defined in (\ref{eq:notation_correction}). Thus, the statement (i) is proven.

    When it comes to proving that $\hat{X}$ is exact, we have to show that $\Law{(\hat{X}_{t_{k}})}=\Law{(X_{t_{k}})}$ for every $k=0,1,..,N$. By the assumption that $\Law{(\hat{X}_{1})} = \Law{(X_{1})}$ it is sufficient to prove that $\hat{p}_{t_{k}|t_{k+1}}(x_{t_{k}}|x_{t_{k+1}}) \equiv p_{t_{k}|t_{k+1}}(x_{t_{k}}|x_{t_{k+1}})$ since the exactness will follow from this fact by mathematical induction. If $X_{0}$ is a constant random variable, $\Law(X_{0}|X_{t})=\Law{(X_{0})}$ also corresponds to the same constant, so $\Var{(X_{0}|X_{t})}=0$ meaning that $\sigma^{*}_{t,h}=\sigma_{t-h,t}$, and the formulae (\ref{eq:p_net}) and (\ref{eq:p_true}) imply the desired result. 

    Let us now consider the second case when $X_{0}\sim \mathcal{N}(\bar{\mu},\delta^{2}\ide)$. It is a matter of simple but lengthy computations to prove another property of Gaussian distributions similar to (\ref{eq:gauss}): if $Z_{0}\sim \mathcal{N}(\bar{\mu},\delta^{2}\ide)$ and $Z_{t}|Z_{0}\sim \mathcal{N}(a_{t}Z_{0},b^{2}_{t}\ide)$, then $Z_{0}|Z_{t}\sim \mathcal{N}(\frac{b_{t}^{2}}{b^{2}_{t}+\delta^{2}a^{2}_{t}}\bar{\mu} + \frac{\delta^{2}a_{t}}{b^{2}_{t}+\delta^{2}a^{2}_{t}}Z_{t},\frac{\delta^{2}b^{2}_{t}}{b_{t}^{2}+\delta^{2}a^{2}_{t}}\ide)$ and $Z_{t}\sim \mathcal{N}(\bar{\mu}a_{t},{(b^{2}_{t}+\delta^{2}a^{2}_{t}})\ide)$. In our case $a_{t}=\gamma_{0,t}$ and $b^{2}_{t}=1-\gamma_{0,t}^{2}$, therefore

    \begin{equation}
    \label{eq:gaussian_revert}
        \Law{(X_{0}|X_{t})}=\mathcal{N}\left(\frac{1-\gamma^{2}_{0,t}}{1-\gamma_{0,t}^{2}+\delta^{2}\gamma^{2}_{0,t}}\bar{\mu}+\frac{\delta^{2}\gamma_{0,t}}{1-\gamma_{0,t}^{2}+\delta^{2}\gamma^{2}_{0,t}}X_{t},\frac{\delta^{2}(1-\gamma_{0,t}^{2})}{1-\gamma_{0,t}^{2}+\delta^{2}\gamma^{2}_{0,t}}\ide\right).
    \end{equation}

    So, $\Var{(X_{0}|X_{t})}$ does not depend on $X_{t}$ and 

    \begin{equation}
        (\sigma^{*}_{t,h})^{2} = \sigma_{t-h,t}^{2} + \frac{\nu_{t-h,t}^{2}}{n}\mathbb{E}_{X_{t}}\left[\Tr{(\Var{(X_{0}|X_{t})})}\right] = \sigma_{t-h,t}^{2} + \nu_{t-h,t}^{2}\frac{\delta^{2}(1-\gamma_{0,t}^{2})}{1-\gamma_{0,t}^{2}+\delta^{2}\gamma^{2}_{0,t}}.
    \end{equation}

    Since $\Law{(X_{t}|X_{t-h})}$, $\Law{(X_{t-h})}$ and $\Law{(X_{t})}$ are Gaussian, Bayes formula implies that $\Law{(X_{t-h}|X_{t})}$ is Gaussian as well with the following mean and covariance matrix:

    \begin{equation}
        \mathbb{E}[X_{t-h}|X_{t}] = \frac{\gamma_{0,t-h}(1-\gamma_{t-h,t}^{2})}{1-\gamma^{2}_{0,t}+\delta^{2}\gamma_{0,t}^{2}}\bar{\mu} + \frac{\gamma_{t-h,t}(1-\gamma^{2}_{0,t-h}+\delta^{2}\gamma_{0,t-h}^{2})}{1-\gamma^{2}_{0,t}+\delta^{2}\gamma_{0,t}^{2}}X_{t},
    \end{equation}

    \begin{equation}
        \Var{(X_{t-h}|X_{t})}=\frac{(1-\gamma_{t-h,t}^{2})(1-\gamma_{0,t-h}^{2}+\delta^{2}\gamma_{0,t-h}^{2})}{1-\gamma_{0,t}^{2}+\delta^{2}\gamma_{0,t}^{2}}\ide.
    \end{equation}

    The distribution $\Law{(\hat{X}_{t-h}|\hat{X}_{t})}$ is also Gaussian by the formula (\ref{eq:p_net}), so to conclude the proof we just need to show that $\mathbb{E}[\hat{X}_{t-h}|\hat{X}_{t}=x]=\mathbb{E}[X_{t-h}|X_{t}=x]$ and $\Var{(\hat{X}_{t-h}|\hat{X}_{t}=x)}=\Var{(X_{t-h}|X_{t}=x)}$ for every $x\in \mathbb{R}^{n}$ for the optimal parameters (\ref{eq:notation_correction}). Recall that for $\kappa^{*}_{t,h}$ and $\omega^{*}_{t,h}$ we have $\hat{\mu}_{t,h}(\kappa^{*}_{t,h}, \omega^{*}_{t,h}) = \mu_{t-h,t}$ and $\hat{\nu}_{t,h}(\kappa^{*}_{t,h}) = \nu_{t-h,t}$. Utilizing the formulae (\ref{eq:notation_gamma}), (\ref{eq:p_net}), (\ref{eq:gaussian_revert}) and the fact that $\gamma_{0,t-h}\cdot\gamma_{t-h,t}=\gamma_{0,t}$ (following from the definition of $\gamma$ in (\ref{eq:xt_distribution})) we conclude that

    \begin{equation}
    \begin{split}
        &\mathbb{E}[\hat{X}_{t-h}|\hat{X}_{t}=x]=\hat{\mu}_{t,h}(\kappa^{*}_{t,h}, \omega^{*}_{t,h})x + \hat{\nu}_{t,h}(\kappa^{*}_{t,h})\mathbb{E}_{p_{0|t}(\cdot|x)}X_{0} \\
        & = \gamma_{t-h,t}\frac{1-\gamma_{0,t-h}^{2}}{1-\gamma_{0,t}^{2}}x + \gamma_{0,t-h}\frac{1-\gamma_{t-h,t}^{2}}{1-\gamma_{0,t}^{2}}\left[\frac{1-\gamma_{0,t}^{2}}{1-\gamma_{0,t}^{2}+\delta^{2}\gamma_{0,t}^{2}}\bar{\mu} + \frac{\delta^{2}\gamma_{0,t}}{1-\gamma_{0,t}^{2}+\delta^{2}\gamma_{0,t}^{2}}x\right] \\
        &= \gamma_{0,t-h}\frac{1-\gamma_{t-h,t}^{2}}{1-\gamma_{0,t}^{2} + \delta^{2}\gamma_{0,t}^{2}}\bar{\mu} + \gamma_{t-h,t}\frac{(1-\gamma_{0,t-h}^{2})(1-\gamma_{0,t}^{2})+\delta^{2}\gamma_{0,t}^{2}(1-\gamma^{2}_{0,t-h})}{(1-\gamma_{0,t}^{2})(1-\gamma_{0,t}^{2}+\delta^{2}\gamma_{0,t}^{2})}x \\
        &+\gamma_{t-h,t}\frac{\delta^{2}\gamma_{0,t-h}^{2}(1-\gamma^{2}_{t-h,t})}{(1-\gamma_{0,t}^{2})(1-\gamma_{0,t}^{2}+\delta^{2}\gamma_{0,t}^{2})}x = \gamma_{0,t-h}\frac{1-\gamma_{t-h,t}^{2}}{1-\gamma_{0,t}^{2} + \delta^{2}\gamma_{0,t}^{2}}\bar{\mu} \\
        &+\gamma_{t-h,t}\frac{(1-\gamma_{0,t-h}^{2})(1-\gamma_{0,t}^{2})+\delta^{2}\gamma_{0,t-h}^{2}(1-\gamma_{0,t}^{2})}{(1-\gamma_{0,t}^{2})(1-\gamma_{0,t}^{2}+\delta^{2}\gamma_{0,t}^{2})}x = \gamma_{0,t-h}\frac{1-\gamma_{t-h,t}^{2}}{1-\gamma_{0,t}^{2} + \delta^{2}\gamma_{0,t}^{2}}\bar{\mu} \\
        &+ \gamma_{t-h,t}\frac{1-\gamma_{0,t-h}^{2}+\delta^{2}\gamma_{0,t-h}^{2}}{1-\gamma_{0,t}^{2}+\delta^{2}\gamma_{0,t}^{2}}x = \mathbb{E}[X_{t-h}|X_{t}=x],
    \end{split}
    \end{equation}

    \begin{equation}
    \begin{split}
        &\Var{(\hat{X}_{t-h}|\hat{X}_{t}=x)} = (\sigma_{t,h}^{*})^{2}\ide = \left(\sigma_{t-h,t}^{2} + \nu_{t-h,t}^{2}\frac{\delta^{2}(1-\gamma_{0,t}^{2})}{1-\gamma_{0,t}^{2}+\delta^{2}\gamma^{2}_{0,t}}\right)\ide \\
        &=\left(\frac{(1-\gamma_{0,t-h}^{2})(1-\gamma_{t-h,t}^{2})}{1-\gamma_{0,t}^{2}}+\gamma_{0,t-h}^{2}\frac{\delta^{2}(1-\gamma_{t-h,t}^{2})^{2}}{(1-\gamma_{0,t}^{2})(1-\gamma_{0,t}^{2}+\delta^{2}\gamma^{2}_{0,t})}\right)\ide \\
        &=\frac{1-\gamma_{t-h,t}^{2}}{(1-\gamma_{0,t}^{2})(1-\gamma_{0,t}^{2}+\delta^{2}\gamma^{2}_{0,t})}\left((1-\gamma_{0,t-h}^{2})(1-\gamma_{0,t}^{2})+\delta^{2}\gamma_{0,t}^{2}(1-\gamma_{0,t-h}^{2})\right)\ide \\
        & + \frac{1-\gamma_{t-h,t}^{2}}{(1-\gamma_{0,t}^{2})(1-\gamma_{0,t}^{2}+\delta^{2}\gamma^{2}_{0,t})}\left(\delta^{2}\gamma_{0,t-h}^{2}(1-\gamma_{t-h,t}^{2})\right)\ide \\
        &=\frac{1-\gamma_{t-h,t}^{2}}{(1-\gamma_{0,t}^{2})(1-\gamma_{0,t}^{2}+\delta^{2}\gamma^{2}_{0,t})}\left((1-\gamma_{0,t-h}^{2})(1-\gamma_{0,t}^{2})+\delta^{2}\gamma_{0,t-h}^{2}(1-\gamma_{0,t}^{2})\right)\ide \\
        &=\frac{(1-\gamma_{t-h,t}^{2})(1-\gamma_{0,t-h}^{2}+\delta^{2}\gamma_{0,t-h}^{2})}{1-\gamma_{0,t}^{2}+\delta^{2}\gamma_{0,t}^{2}}\ide = \Var{(X_{t-h}|X_{t}=x)}.
    \end{split}
    \end{equation}

\end{proof}

\section{Reverse MR-VP SDE solver}
\label{app:mr_vp}
MR-VP DPM is characterized by the following forward and reverse diffusions:

\begin{equation}
\label{eq:mrvp_fwd_sde}
    dX_{t}=\frac{1}{2}\beta_{t}(\bar{X} - X_{t})dt + \sqrt{\beta_{t}}d\overrightarrow{W_{t}}\ ,
\end{equation}

\begin{equation}
\label{eq:mrvp_rev_sde}
    d\hat{X}_{t}=\left(\frac{1}{2}\beta_{t}(\bar{X} - \hat{X}_{t}) - \beta_{t}s_{\theta}(\hat{X}_{t}, \bar{X}, t)\right)dt + \sqrt{\beta_{t}}d\overleftarrow{W_{t}}.
\end{equation}

Using the same method as in Appendix \ref{app:sde_solution}, we can show that for $s<t$

\begin{equation}
    \Law{(X_{t}|X_{s})}=\mathcal{N}(\gamma_{s,t}X_{s}+(1-\gamma_{s,t})\bar{X},(1-\gamma_{s,t}^{2})\ide), \ \ \ \ \gamma_{s,t}=e^{-\frac{1}{2}\int_{s}^{t}{\beta_{u}du}}.
\end{equation}

With the following notation:

\begin{equation}
    \mu_{s,t}=\gamma_{s,t}\frac{1-\gamma_{0,s}^{2}}{1-\gamma_{0,t}^{2}}, \ \ \ \  \nu_{s,t}=\gamma_{0,s}\frac{1-\gamma_{s,t}^{2}}{1-\gamma_{0,t}^{2}}, \ \ \ \
    \sigma_{s,t}^{2}=\frac{(1-\gamma_{0,s}^{2})(1-\gamma_{s,t}^{2})}{1-\gamma_{0,t}^{2}}
\end{equation}

we can write down the parameters of Gaussian distribution $X_{s}|X_{t},X_{0}$:

\begin{equation}
    \mathbb{E}[X_{s}|X_{t},X_{0}]=\bar{X} + \mu_{s,t}(X_{t}-\bar{X}) + \nu_{s,t}(X_{0}-\bar{X}), \ \ \Var{(X_{s}|X_{t},X_{0})}=\sigma_{s,t}^{2}\ide.
\end{equation}

The Lemma \ref{lm:net_optimal} for MR-VP DPMs takes the following shape:

\begin{equation}
\begin{split}
    s_{\theta^{*}}(x,\bar{X},t) &= -\frac{1}{1-\gamma_{0,t}^{2}}\left(x - (1-\gamma_{0,t})\bar{X} - \gamma_{0,t}\mathbb{E}_{p_{0|t}(\cdot|x)}X_{0}\right) \\&
    =-\frac{1}{1-\gamma_{0,t}^{2}}\left((x - \bar{X}) - \gamma_{0,t}\left(\mathbb{E}_{p_{0|t}(\cdot|x)}X_{0} - \bar{X}\right)\right).
\end{split}
\end{equation}

The class of reverse SDE solvers we consider is 
\begin{equation}
    \hat{X}_{t-h} = \hat{X}_{t} + \beta_{t}h\left(\left(\frac{1}{2}+\hat{\omega}_{t,h}\right)(\hat{X}_{t} - \bar{X}) + (1+\hat{\kappa}_{t,h})s_{\theta}(\hat{X}_{t},\bar{X},t)\right) + \hat{\sigma}_{t,h}\xi_{t},
\end{equation}

where $t=1,1-h,..,h$ and $\xi_{t}$ are i.i.d. samples from $\mathcal{N}(0,\ide)$. Repeating the argument of the Theorem \ref{th:main} leads to the following optimal (in terms of likelihood of the forward diffusion sample paths) parameters:

\begin{equation}
\begin{split}
     \kappa_{t,h}^{*} =& \frac{\nu_{t-h,t}(1-\gamma_{0,t}^{2})}{\gamma_{0,t}\beta_{t}h} - 1,  \ \ \ \ 
     \omega_{t,h}^{*}=\frac{\mu_{t-h,t}-1}{\beta_{t}h} + \frac{1+\kappa_{t,h}^{*}}{1-\gamma_{0,t}^{2}} -\frac{1}{2},
     \\& (\sigma^{*}_{t,h})^{2} = \sigma^{2}_{t-h,t} + \frac{1}{n}\nu_{t-h,t}^{2}\mathbb{E}_{X_{t}}\left[\Tr{\left(\Var{(X_{0}|X_{t})}\right)}\right],
\end{split}
\end{equation}

which are actually the same as the optimal parameters (\ref{eq:notation_correction}) for VP DPM. It is of no surprise since MR-VP DPM and VP-DPM differ only by a constant shift.

\section{Reverse sub-VP SDE solver}
\label{app:sub_vp}

Sub-VP DPM is characterized by the following forward and reverse diffusions:

\begin{equation}
\label{eq:subvp_fwd_sde}
    dX_{t}=-\frac{1}{2}\beta_{t}X_{t}dt + \sqrt{\beta_{t}(1 - e^{-2\int_{0}^{t}{\beta_{u}du}})}d\overrightarrow{W_{t}}\ ,
\end{equation}

\begin{equation}
\label{eq:subvp_rev_sde}
    d\hat{X}_{t}=\left(-\frac{1}{2}\beta_{t}\hat{X}_{t} - \beta_{t}\left(1 - e^{-2\int_{0}^{t}{\beta_{u}du}}\right)s_{\theta}(\hat{X}_{t}, t)\right)dt + \sqrt{\beta_{t}(1 - e^{-2\int_{0}^{t}{\beta_{u}du}})}d\overleftarrow{W_{t}}.
\end{equation}

Using the same method as in Appendix \ref{app:sde_solution}, we can show that for $s<t$

\begin{equation}
    \Law{(X_{t}|X_{s})}=\mathcal{N}(\gamma_{s,t}X_{s},\left(1+\gamma_{0,t}^{4}-\gamma_{s,t}^{2}(1 + \gamma_{0,s}^{4})\right)\ide), \ \ \ \ \gamma_{s,t}=e^{-\frac{1}{2}\int_{s}^{t}{\beta_{u}du}}.
\end{equation}

Note that for $s=0$ this expression simplifies to

\begin{equation}
    \Law{(X_{t}|X_{0})}=\mathcal{N}(\gamma_{0,t}X_{0},(1 - \gamma_{0,t}^{2})^{2}\ide).
\end{equation}

With the following notation:

\begin{equation}
\begin{split}
    \mu_{s,t}=&\gamma_{s,t}\left(\frac{1-\gamma_{0,s}^{2}}{1-\gamma_{0,t}^{2}}\right)^{2}, \ \ \ \  \nu_{s,t}=\gamma_{0,s}\frac{1+\gamma_{0,t}^{4}-\gamma_{s,t}^{2}(1 + \gamma_{0,s}^{4})}{(1-\gamma_{0,t}^{2})^{2}}, \\&
    \sigma_{s,t}^{2}=\frac{(1-\gamma_{0,s}^{2})^{2}(1+\gamma_{0,t}^{4}-\gamma_{s,t}^{2}(1 + \gamma_{0,s}^{4}))}{(1-\gamma_{0,t}^{2})^{2}}
\end{split}
\end{equation}

we can write down the parameters of Gaussian distribution $X_{s}|X_{t},X_{0}$:

\begin{equation}
    \mathbb{E}[X_{s}|X_{t},X_{0}]=\mu_{s,t}X_{t} + \nu_{s,t}X_{0}, \ \ \Var{(X_{s}|X_{t},X_{0})}=\sigma_{s,t}^{2}\ide.
\end{equation}

The Lemma \ref{lm:net_optimal} for sub-VP DPMs takes the following shape:

\begin{equation}
    s_{\theta^{*}}(x,t) = -\frac{1}{(1-\gamma_{0,t}^{2})^{2}}\left(x - \gamma_{0,t}\mathbb{E}_{p_{0|t}(\cdot|x)}X_{0}\right).
\end{equation}

The class of reverse SDE solvers we consider is 
\begin{equation}
    \hat{X}_{t-h} = \hat{X}_{t} + \beta_{t}h\left(\left(\frac{1}{2}+\hat{\omega}_{t,h}\right)\hat{X}_{t} + (1+\hat{\kappa}_{t,h})\left(1-e^{-2\int_{0}^{t}{\beta_{u}du}}\right)s_{\theta}(\hat{X}_{t},t)\right) + \hat{\sigma}_{t,h}\xi_{t},
\end{equation}

where $t=1,1-h,..,h$ and $\xi_{t}$ are i.i.d. samples from $\mathcal{N}(0,\ide)$. Repeating the argument of the Theorem \ref{th:main} leads to the following optimal (in terms of likelihood of the forward diffusion sample paths) parameters:

\begin{equation}
\begin{split}
     \kappa_{t,h}^{*} = &\frac{\nu_{t-h,t}(1-\gamma_{0,t}^{2})}{\gamma_{0,t}\beta_{t}h(1+\gamma_{0,t}^{2})} - 1,  \ \ \ \ 
     \omega_{t,h}^{*}=\frac{\mu_{t-h,t}-1}{\beta_{t}h} + \frac{(1+\kappa_{t,h}^{*})(1+\gamma_{0,t}^{2})}{1-\gamma_{0,t}^{2}} -\frac{1}{2},
     \\& (\sigma^{*}_{t,h})^{2} = \sigma^{2}_{t-h,t} + \frac{1}{n}\nu_{t-h,t}^{2}\mathbb{E}_{X_{t}}\left[\Tr{\left(\Var{(X_{0}|X_{t})}\right)}\right].
\end{split}
\end{equation}

\section{Reverse VE SDE solver}
\label{app:ve}

VE DPM is characterized by the following forward and reverse diffusions:

\begin{equation}
\label{eq:ve_fwd_sde}
    dX_{t}=\sqrt{\left(\sigma_{t}^{2}\right)'}d\overrightarrow{W_{t}}\ ,
\end{equation}

\begin{equation}
\label{eq:ve_rev_sde}
    d\hat{X}_{t}=-\left(\sigma_{t}^{2}\right)'s_{\theta}(\hat{X}_{t}, t)dt + \sqrt{\left(\sigma_{t}^{2}\right)'}d\overleftarrow{W_{t}}.
\end{equation}

Since for $s<t$
\begin{equation}
    X_{t} = X_{s} + \int_{s}^{t}{\sqrt{\left(\sigma_{u}^{2}\right)'}d\overrightarrow{W_{u}}} \ ,
\end{equation}
similar argument as in Appendix \ref{app:sde_solution} allows showing that

\begin{equation}
    \Law{(X_{t}|X_{s})}=    \mathcal{N}\left(X_{s},\ide\cdot\int_{s}^{t}{\left(\sigma_{u}^{2}\right)'du}\right) = \mathcal{N}(X_{s},(\sigma_{t}^{2}-\sigma_{s}^{2})\ide).
\end{equation}

With the following notation:

\begin{equation}
    \mu_{s,t}=\frac{\sigma_{s}^{2}-\sigma_{0}^{2}}{\sigma_{t}^{2}-\sigma_{0}^{2}}, \ \ \ \  \nu_{s,t}=\frac{\sigma_{t}^{2}-\sigma_{s}^{2}}{\sigma_{t}^{2}-\sigma_{0}^{2}}, \ \ \ \
    \sigma_{s,t}^{2}=\frac{(\sigma_{t}^{2}-\sigma_{s}^{2})(\sigma_{s}^{2}-\sigma_{0}^{2})}{\sigma_{t}^{2}-\sigma_{0}^{2}},
\end{equation}

we can write down the parameters of Gaussian distribution $X_{s}|X_{t},X_{0}$:

\begin{equation}
    \mathbb{E}[X_{s}|X_{t},X_{0}]=\mu_{s,t}X_{t} + \nu_{s,t}X_{0}, \ \ \Var{(X_{s}|X_{t},X_{0})}=\sigma_{s,t}^{2}\ide.
\end{equation}

The Lemma \ref{lm:net_optimal} for VE DPMs takes the following shape:

\begin{equation}
    s_{\theta^{*}}(x,t) = -\frac{1}{\sigma_{t}^{2}-\sigma_{0}^{2}}\left(x - \mathbb{E}_{p_{0|t}(\cdot|x)}X_{0}\right).
\end{equation}

Repeating the argument of the Theorem \ref{th:main} leads to the following optimal (in terms of likelihood of the forward diffusion sample paths) reverse SDE solver:

\begin{equation}
    \hat{X}_{t-h} = \hat{X}_{t} + (\sigma_{t}^{2} - \sigma_{t-h}^{2})s_{\theta}(\hat{X}_{t},t) + \sigma_{t,h}^{*}\xi_{t},
\end{equation}

where

\begin{equation}
    (\sigma^{*}_{t,h})^{2} = \frac{(\sigma_{t}^{2}-\sigma_{t-h}^{2})(\sigma_{t-h}^{2}-\sigma_{0}^{2})}{\sigma_{t}^{2}-\sigma_{0}^{2}} + \frac{1}{n}\nu_{t-h,t}^{2}\mathbb{E}_{X_{t}}\left[\Tr{\left(\Var{(X_{0}|X_{t})}\right)}\right],
\end{equation}

$t=1,1-h,..,h$ and $\xi_{t}$ are i.i.d. samples from $\mathcal{N}(0,\ide)$.

\section{Toy examples}
\label{app:toy}
In this section we consider toy examples where data distribution $X_{0}$ is represented by a single point (corresponding to the case \textit{(ii)} of the Theorem \ref{th:main}) and by two points (corresponding to more general case \textit{(i)}). In the first case the point is unit vector $i=(1,1,..,1)$ of dimensionality $100$, in the second one two points $i$ and $-2i$ have the same probability. We compare performance of two solvers, Euler-Maruyama and the proposed Maximum Likelihood, depending on the number $N\in \{1,2,5,10,100,1000\}$ of solver steps. The output of the perfectly trained score matching network $s_{\theta^{*}}$ is computed analytically and Gaussian noise with variance $\varepsilon\in \{0.0,0.1,0.5\}$ is added to approximate the realistic case when the network $s_{\theta}$ we use is not trained till optimality. We considered VP diffusion model (\ref{eq:fwd_rev_sde}) with $\beta_{0}=0.05$ and $\beta_{1}=20.0$.

The results of the comparison are given in Table~\ref{tab:toy} and can be summarized in the following:
\begin{itemize}
    \item for both methods, larger $N$ means better quality;
    \item for both methods, more accurate score matching networks (smaller $\varepsilon$) means better quality;
    \item for large number of steps, both methods perform the same;
    \item it takes less number of steps for the proposed Maximum Likelihood solver to converge with a good accuracy to data distribution than it does for Euler-Maruyama solver;
    \item in accordance with the statement \textit{(ii)} of the Theorem \ref{th:main}, the optimal Maximum Likelihood solver leads to exact data reconstruction in the case when data distribution is constant and score matching network is trained till optimality (i.e. $\varepsilon=0.0$) irrespective of the number of steps $N$.
\end{itemize}

Also, in the second example where $X_{0}\in \{i,-2i\}$ the Maximum Likelihood SDE solver reconstructs the probabilities of these two points better than Euler-Maruyama which tends to output ``$i$-samples'' (which are closer to the origin) more frequently than ``$-2i$-samples''. E.g. for $\varepsilon=0.0$ and $N=10$ the frequency of ``$i$-samples'' generated by Euler-Maruyama scheme is $54\%$ while this frequency for Maximum Likelihood scheme is $50\%$ ($500k$ independent runs were used to calculate these frequencies).

\begin{table}
\caption{Maximum Likelihood (ML) and Euler-Maruyama (EM) solvers comparison in terms of Mean Square Error (MSE). MSE $<0.001$ is denoted by \textit{conv}, MSE $>1.0$ is denoted by \textit{div}. $N$ is the number of SDE solver steps, $\varepsilon$ is variance of Gaussian noise added to perfect scores $s_{\theta^{*}}$.}
\begin{center}
\begin{tabular}{|c|c|c|c|c|c|c|}
\hline
MSE &
\multicolumn{3}{c|}{\begin{tabular}[c]{@{}c@{}c@{}}$X_{0}=\{i\}$\end{tabular}} &
\multicolumn{3}{c|}{\begin{tabular}[c]{@{}c@{}c@{}}$X_{0}=\{i, -2i\}$\end{tabular}} \\ \cline{2-7}
ML / EM
&$\varepsilon=0.0$ &$\varepsilon=0.1$ &$\varepsilon=0.5$ &$\varepsilon=0.0$ &$\varepsilon=0.1$ &$\varepsilon=0.5$ \\ \hline
$N=1$ &\textit{conv} / \textit{div} &\textit{div} / \textit{div} &\textit{div} / \textit{div} &\textit{div} / \textit{div} &\textit{div} / \textit{div} &\textit{div} / \textit{div}\\ \hline
$N=2$ &\textit{conv} / \textit{div} &\textit{div} / \textit{div} &\textit{div} / \textit{div} &$0.15$ / \textit{div} &\textit{div} / \textit{div} &\textit{div} / \textit{div}\\ \hline
$N=5$ &\textit{conv} / \textit{div} &$0.017$ / \textit{div} &$0.085$ / \textit{div} &\textit{conv} / \textit{div} &$0.017$ / \textit{div} &$0.085$ / \textit{div}\\ \hline
$N=10$ &\textit{conv} / $0.57$ &$0.001$/$0.59$ &$0.005$/$0.67$ &\textit{conv} / $0.57$ &$0.001$/$0.59$ &$0.006$/$0.67$\\ \hline
$N=100$ &\textit{conv} / $0.01$ &\textit{conv} / $0.01$ &\textit{conv} / $0.01$ &\textit{conv} / $0.01$ &\textit{conv} / $0.01$ &\textit{conv} / $0.01$\\ \hline
$N=1000$ &\textit{conv} / \textit{conv} &\textit{conv} / \textit{conv} &\textit{conv} / \textit{conv} &\textit{conv} / \textit{conv} &\textit{conv} / \textit{conv} &\textit{conv} / \textit{conv}\\ \hline
\end{tabular}
\end{center}
\label{tab:toy}
\end{table}

\section{Speaker conditioning network}
\label{app:h}
The function $x\cdot tanh(softplus(x))$ is used as a non-linearity in the speaker conditioning network $g_{t}(Y)$. First, time embedding $t_{e}$ is obtained by the following procedure: time $t\in [0,1]$ is encoded with positional encoding \citep{Song-main}, then resulting $256$-dimensional vector $t'$ is passed through the first linear module with $1024$ units, then a non-linearity is applied to it and then it is passed through the second linear module with $256$ units. Next, noisy mel-spectrogram $Y_{t}$ for \textit{wodyn} input type or $Y_{t}$ concatenated with $\{Y_{s}|s=0.5/15,1.5/15,..,14.5/15\}$ for \textit{whole} is passed through $6$ blocks consisting of $2$D convolutional layers each followed by instance normalization and Gated Linear Unit. The number of input and output channels of these convolutions is $(1, 64), (32, 64), (32, 128), (64, 128), (64, 256), (128, 256)$ for \textit{wodyn} input type and the same but with $16$ input channels in the first convolution for \textit{whole} input type. After the $2$nd and $4$th blocks $MLP_{1}(t_{e})$ and $MLP_{2}(t_{e})$ are broadcast-added where $MLP_{1}$ ($MLP_{2}$) are composed of a non-linearity followed by a linear module with $32$ ($64$) units. After the last $6$th block the result is passed through the final convolution with $128$ output channels and average pooling along both time and frequency axes is applied resulting in $128$-dimensional vector. All convolutions except for the final one have (kernel, stride, zero padding) = $(3,1,1)$ while for the final one the corresponding parameters are $(1,0,0)$. Denote the result of such processing of $Y$ by $c$ for \textit{wodyn} and \textit{whole} input types.

Clean target mel-spectrogram $Y_{0}$ is used to obtain $256$-dimensional speaker embedding $d$ with the pre-trained speaker verification network \citep{dvector} which is \textit{not} trained. Vectors $d$, $c$ and $t'$ are concatenated (except for \textit{d-only} input type where we concatenate only $d$ and $t'$), passed through a linear module with $512$ units followed by a non-linearity and a linear module with $128$ units. The resulting $128$-dimensional vector is the output of the speaker conditioning network $g_{t}(Y)$.

\section{Training hyperparameters and other details}
\label{app:details}
Encoders and decoders were trained with batch sizes $128$ and $32$ and Adam optimizer with initial learning rates $0.0005$ and $0.0001$ correspondingly. Encoders and decoders in VCTK models were trained for $500$ and $200$ epochs respectively; as for LibriTTS models, they were trained for $300$ and $110$ epochs. The datasets were downsampled to $22.05$kHz which was the operating rate of our VC models. VCTK recordings were preprocessed by removing silence in the beginning and in the end of utterances. To fit GPU memory, decoders were trained on random speech segments of approximately $1.5$ seconds rather than on the whole utterances. Training segments for reconstruction and the ones used as input to the speaker conditioning network $g_{t}(Y)$ were different random segments extracted from the same training utterances. Noise schedule parameters $\beta_{0}$ and $\beta_{1}$ were set to $0.05$ and $20.0$. 

Our VC models operated on mel-spectrograms with $80$ mel features and sampling rate $22.05$kHz. Short-Time Fourier Transform was used to calculate spectra with $1024$ frequency bins. Hann window of length $1024$ was applied with hop size $256$.

For \textit{Diff-LibriTTS} models we used simple spectral subtraction algorithm in mel domain with spectral floor parameter $\beta=0.02$ as post-processing to reduce background noise sometimes produced by these models. Noise spectrum was estimated on speech fragments automatically detected as the ones corresponding to silence in source mel-spectrogram.

\section{Details of AMT tests}
\label{app:subj}
For fair comparison with the baselines all the recordings were downsampled to $16$kHz; we also normalized their loudness. In speech naturalness tests workers chosen by geographic criterion were asked to assess the overall quality of the synthesized speech, i.e to estimate how clean and natural (human-sounding) it was. Five-point Likert scale was used: $1$ - ``Bad'', $2$ - ``Poor'', $3$ - ``Fair'', $4$ - ``Good'', $5$ - ``Excellent''. Assessors were asked to wear headphones and work in a quiet environment. As for speaker similarity tests, workers were asked to assess how similar synthesized samples sounded to target speech samples in terms of speaker similarity. Assessors were asked not to pay attention to the overall quality of the synthesized speech (e.g. background noise or incorrect pronunciation). 
Five-point scale was used: $1$ - ``Different: absolutely sure'', $2$ - ``Different: moderately sure'', $3$ - ``Cannot decide more same or more different'', $4$ - ``Same: moderately sure'', $5$ - ``Same: absolutely sure''.

\end{document}